\newtheorem{lemma}{Lemma}[section]
\newtheorem{theorem}[lemma]{Theorem}
\newtheorem{definition}[lemma]{Definition}
\newtheorem{proposition}[lemma]{Proposition}
\newtheorem{observation}[lemma]{Observation}
\newtheorem{corollary}[lemma]{Corollary}
\newtheorem{fact}[lemma]{Fact}
\title{Algorithmic Improvements to List Decoding of \\Folded Reed-Solomon Codes}
\author{
Vikrant Ashvinkumar\thanks{{\tt Department of Computer Science, Rutgers University}. {\tt va264@cs.rutgers.edu}} \and
Mursalin Habib\thanks{{\tt Department of Computer Science, Rutgers University}. {\tt mursalin.habib@rutgers.edu}. Supported by the National Science Foundation under Grants CCF-2313372 and CCF-2443697.} \and
 Shashank Srivastava\thanks{{\tt DIMACS, Rutgers University \& Institute for Advanced Study}. {\tt shashanks@ias.edu}.}
 }
\date{}
\newcommand{\eps}{\ensuremath{\varepsilon}}
\DeclarePairedDelimiter\parens{\lparen}{\rparen}
\DeclarePairedDelimiter\abs{\lvert}{\rvert}
\DeclarePairedDelimiter\braces{\lbrace}{\rbrace}
\DeclarePairedDelimiter\brackets{\lbrack}{\rbrack}
\DeclarePairedDelimiterXPP\lnorm[1]{}\lVert\rVert{_2}{#1}
\newcommand{\inbraces}[1]{\braces*{#1}}           
\newcommand{\insquare}[1]{\brackets*{#1}}             
\newcommand{\set}[1]{\braces*{#1}}
\newcommand{\card}[1]{\abs*{#1}}
\newcommand{\paren}[1]{\parens*{#1}}
\newcommand{\F}{\mathbb{F}}
\newcommand{\Z}{\mathbb{Z}}
\newcommand{\poly}{\mathrm{poly}}
\newcommand{\polylog}{\mathrm{polylog}}
\newcommand{\tildeO}[1]{\widetilde{O}\parens*{#1}}
\newcounter{tboxalg}
\renewcommand{\thetboxalg}{\arabic{tboxalg}} 
\newcommand{\algname}[1]{\underline{\textbf{Algorithm~\thetboxalg:}~#1}\\}
\newenvironment{tbox}{
  \refstepcounter{tboxalg}
  \begin{tcolorbox}[
    enlarge top by=5pt,
    enlarge bottom by=5pt,
    float=h,
    boxsep=2pt,
    left=5pt,
    right=7pt,
    top=10pt,
    arc=0pt,
    boxrule=1pt,toprule=1pt,
    colback=white
  ]
}{
  \end{tcolorbox}
}
\newcommand{\frs}{\mathcal{C}^{\textnormal{FRS}}\xspace}
\def\efrs{{\mathrm{Enc}^{\mathrm{FRS}}}}
\newcommand{\mper}{\,.}
\newcommand{\Esymb}{\mathbb{E}}
\newcommand{\Psymb}{\mathbb{P}}
\DeclareMathOperator*{\ExpOp}{\Esymb}
\DeclareMathOperator*{\prob}{\mathbb{P}}
\def\Pr#1{%
    \ProbabilityRender{\Psymb}{#1}%
}
\def\Ex#1{%
    \ProbabilityRender{\Esymb}{#1}%
}
\def\ProbabilityRender#1#2{
  \@ifnextchar\bgroup%
  {\renderwithdist{#1}{#2}}
   {\singlervrender{#1}{#2}}
}
\def\singlervrender#1#2{%
   \ensuremath{\mathchoice
       {{#1}\insquare{ #2 }}
       {{#1}\insquare{ #2 }}
       {{#1}\insquare{ #2 }}
       {{#1}\insquare{ #2 }}
   }
}
\def\renderwithdist#1#2#3{%
   \@ifnextchar\bgroup
   {\superfancyrender{#1}{#2}{#3}}
   {\ensuremath{\mathchoice
      {\underset{#2}{#1}\insquare{ #3 }}
      {{#1}_{#2}\insquare{ #3 }}
      {{#1}_{#2}\insquare{ #3 }}
      {{#1}_{#2}\insquare{ #3 }}
     }
   }
}
\def\superfancyrender#1#2#3#4#5{
   \ensuremath{\mathchoice
      {\underset{#1}{{#1}}\left#4 #3 \right#5}
      {{#1}_{#2}#4 #3 #5}
      {{#1}_{#2}#4 #3 #5}
      {{#1}_{#2}#4 #3 #5}
   }
}
\newcommand{\sub}{\ensuremath{\subseteq}}
\newcommand{\defeq}{:=}  
\newcommand{\spike}[2]
{\bgroup
  \sbox0{#2}%
  \rlap{\usebox0}%
  \hspace{0.5\wd0}%
  \makebox[0pt][c]{\textcolor{lightgray}{\rule[\dimexpr \ht0+1pt]{0.5pt}{#1}}}
  \makebox[0pt][c]{\textcolor{lightgray}{\rule[\dimexpr -\dp0-#1-1pt]{0.5pt}{#1}}}
  \hspace{0.5\wd0}%
\egroup}
\newcommand{\calC}{{\mathcal C}}
\newcommand{\calH}{{\mathcal H}}
\newcommand{\calL}{{\mathcal L}}
\begin{document}
\maketitle
\begin{abstract}
	Folded Reed-Solomon (FRS) codes are a well-studied family of codes, known for achieving list decoding capacity. In this work, we give improved deterministic and randomized algorithms for list decoding FRS codes of rate $R$ up to radius $1-R-\varepsilon$.\vspace{0.1cm}

    We present a deterministic decoder that runs in near-linear time $\widetilde{O}_{\varepsilon}(n)$, improving upon the best-known runtime $n^{\Omega(1/\varepsilon)}$ for decoding FRS codes. Prior to our work, no capacity achieving code was known whose deterministic decoding could be done in time $\widetilde{O}_{\varepsilon}(n)$.\vspace{0.1cm}

    We also present a randomized decoder that runs in fully polynomial time $\mathrm{poly}(1/\varepsilon) \cdot \widetilde{O}(n)$, improving the best-known runtime $\mathrm{exp}(1/\varepsilon)\cdot \widetilde{O}(n)$ for decoding FRS codes. Again, prior to our work, no capacity achieving code was known whose decoding time depended polynomially on $1/\varepsilon$.\vspace{0.1cm}
%

    Our results are based on improved pruning procedures for finding the list of codewords inside a constant-dimensional affine subspace.
\end{abstract}

\section{Introduction}

An error-correcting code, or simply code, $\calC$ is a subset of $\Sigma^n$ for some finite alphabet $\Sigma$, and the elements of this set are called codewords. 
The code $\calC$ is supposed to be designed in a way that any two codewords in $\calC$ are far from each other in Hamming distance. The size of $\calC$ is captured by its rate $R(\calC)$, and the pairwise separation by its distance $\Delta(\calC)$, defined as:
\[
	R(\calC) \defeq \frac{\log_{|\Sigma|}{|\calC|}}{n} \qquad \text{and} \qquad \Delta(\calC) \defeq \min_{\substack{x,y\in \calC \\ x\neq y}} \Delta(x,y)
\]
where $\Delta(x,y) \in [0,1]$ denotes the normalized Hamming distance between strings $x$ and $y$. The integer $n$ is called the block length of the code.

\paragraph{Unique Decoding.} The main goal in designing codes is to tolerate errors. The error-correcting capability of a code is facilitated by its distance, since, for a code with distance \(\Delta\), as long as a codeword is corrupted in fewer than $\frac{\Delta \cdot n}{2}$ positions, one can always uniquely map the corrupted codeword back to the original codeword. This process is called \emph{unique decoding}. 
Thus, to maximize our error tolerance, we would like to design codes with as large distance as possible. 
However, as can be intuitively seen, a large distance for a code is in tension with its size. This tension is captured by the Singleton bound, which says that for any code \(\mathcal{C}\), we must have $R(\calC) + \Delta(\calC) \leq 1 +\frac{1}{n}$.

That is, for an infinite family of codes with rates lower bounded by $R$ and distances lower bounded by $\Delta$, we must have $R+\Delta\leq 1$, or $\Delta \leq 1-R$. 
Together with the above, this implies that for a rate $R$ code, the largest fraction of errors we can hope to correct is $\frac{1-R}{2}$.

\paragraph{List Decoding.} The above suggests that even for codes of very small rate, the best error correction radius cannot exceed $\frac{1}{2}$. List decoding is a relaxation where the decoder is allowed to output a \emph{list} of potential codewords that a corrupted codeword could have come from. 
Since the corruption is adversarial and we only assume an upper bound on the number of positions corrupted, this list of potential codewords is simply the list of codewords contained in a Hamming ball centered around the corrupted codeword. 
That is, given any string $g \in \Sigma^n$ and a decoding radius $\eta$, the list decoding task is to output the list of all codewords whose relative distance from $g$ is less than $\eta$. We denote this list by $\calL(g,\eta)$.

However, for this task to be possible efficiently, the list $\calL(g,\eta)$ must be of size polynomial in $n$, since otherwise it is hopeless to find a polynomial time algorithm outputting the list. This leads to the following two distinct notions of list decodability up to radius $\eta$:
\begin{itemize}
\item[-] $\calC$ is \emph{combinatorially} list decodable if for every $g$, the list $\calL(g,\eta)$ is of size polynomial in $n$.
\item[-] $\calC$ is \emph{algorithmically} list decodable if there is a polynomial time algorithm that receives $g$ as input and outputs the list $\calL(g,\eta)$.
\end{itemize}
Of course, combinatorial list decodability is a prerequisite for algorithmic list decodability.

\paragraph{List Decoding Capacity.} 
The potential of list decoding is perhaps best illustrated by random codes or random linear codes, both of which yield families of rate $R$ codes that are combinatorially list decodable up to radii arbitrarily close to $1-R$. On the other hand, it is known that any code of rate $R$ cannot be list decoded up to a radius larger than $1-R$. 
Therefore, this threshold $1-R$ is called the list decoding capacity.\footnote{We remark that one can establish better upper bounds than $1-R$ on the list decoding of rate $R$ codes over small alphabets such as binary codes, but in this work we are only concerned with the large alphabet regime.}

While many random ensembles of codes offer combinatorial list decodability up to the capacity $1-R$, we do not know of any efficient algorithms to decode them. 
Explicit constructions of codes matching the combinatorial behavior of random codes often pave the path to efficient decodability, as algorithms can exploit the structure exposed by the combinatorial proof. 
Furthermore, constructing such codes explicitly also becomes an important pseudorandomness challenge, and they can often be used as pseudorandom primitives in other applications \cite{GUV09, Vadhan12, FG15}.

\paragraph{Explicit Constructions.} The well-studied Reed-Solomon family of codes yields rate $R$ codes with distance $1-R$. Each codeword of a Reed-Solomon code is obtained by evaluating a polynomial in $\F_q[X]$ of degree less than $Rn$ on $n$ distinct points in $\F_q$. 
This means that $q$ must be at least $n$, and typically one chooses $q= \Theta(n)$. Further, these codes can be uniquely decoded up to radius $\frac{1-R}{2}$ in near-linear time~\cite{reed1978fast}.

The works by Sudan~\cite{Sudan97} and by Guruswami and Sudan~\cite{GS99} give algorithms to decode Reed-Solomon codes up to radius $1-\sqrt{R}$. By choosing $R$ to be small, this allows error correction from a fraction of errors that approaches 1, a fact that led to several applications in complexity theory~\cite{Gur06}. 
This remains the best radius for algorithmic list decoding of Reed-Solomon codes, and for a while, it remained open to design explicit codes with rate $R$ that could be decoded close to the capacity $1-R$.

\begin{sloppypar}
The first construction of codes achieving list decoding capacity was achieved by Guruswami and Rudra~\cite{GR08}, building on the work of Parvaresh and Vardy~\cite{PV05}, using Folded Reed-Solomon (FRS) codes. 
They used a careful modification of the Reed-Solomon code to give for any $\eps > 0$, a code of rate $R$ and distance $1-R$, that is list decodable up to a radius $1-R-\eps$. 
The alphabet size for the code is $n^{O(1/\eps^2)}$, the list produced is of size at most $n^{O(\nicefrac{1}{\eps})}$, and the decoding can be done in time $n^{O(\nicefrac{1}{\eps})}$.
\end{sloppypar}

\paragraph{Algorithmic challenges.} A lot of work has since been done on trying to improve the alphabet size, list size, and decoding time above. Here, we focus on progress made towards improving the decoding time for FRS codes. 
Note that the decoder must output the entire list, which means that its runtime is lower bounded by the list size. Therefore, any improvement to the runtime must also prove better list size bounds.

The first progress in this direction was made by Guruswami~\cite{Gur11} who, building on ideas of Vadhan~\cite{Vadhan12}, showed that the list of codewords is always contained in an affine subspace of dimension $O(\nicefrac{1}{\eps})$, and that a basis for it can be found in time $(\nicefrac{1}{\eps})^{O(1)}\cdot O(n^2)$. 
Note that the list of codewords could still be of size $q^{\nicefrac{1}{\eps}} \approx n^{\nicefrac{1}{\eps}}$ in the worst case, and even algorithmically searching over the subspace takes time $\Omega(n^{\nicefrac{1}{\eps}})$.

\begin{sloppypar}
Nevertheless, this structural characterization was used by Kopparty, Ron-Zewi, Saraf, and Wootters~\cite{KRSW23} to show that the list size above can be made $(\nicefrac{1}{\eps})^{O(\nicefrac{1}{\eps})}$. In fact, their proof was algorithmic: they gave a randomized algorithm such that any codeword in the list is output with probability at least $\eps^{O(\nicefrac{1}{\eps})}$, which immediately implies the list size bound. This proof was subsequently simplified and tightened by Tamo~\cite{Tamo24}.
\end{sloppypar}

Finally, two concurrent works from last year, by the third author~\cite{Sri25} and by Chen and Zhang~\cite{CZ25} brought down the list size to $O(\nicefrac{1}{\eps^2})$ and $O(\nicefrac{1}{\eps})$ respectively. Note that when decoding up to radius $1-R-\eps$, the list must be of size at least $\Omega(\nicefrac{1}{\eps})$ by the generalized Singleton bound \cite{ST20}, and the Chen-Zhang result matches it optimally. The interested reader is referred to the exposition by Garg, Harsha, Kumar, Saptharishi, and Shankar~\cite{GHKSS25} that covers these recent improvements.

\subsection{Our Results}
\label{sec:our-results}

The algorithm for decoding FRS codes is typically implemented as a two-step process. First, given the corrupted codeword, the algorithm finds a basis for an affine subspace that contains the list. Second, the algorithm searches over this affine subspace to discover the final list. As mentioned before, it was shown in \cite{Gur11} how to implement the first step in time $(\nicefrac{1}{\eps})^{O(1)}\cdot O(n^2)$, which was recently improved by Goyal, Harsha, Kumar and Shankar~\cite{GHKS24} to time $(\nicefrac{1}{\eps})^{O(1)}\cdot \tildeO{n}$.

In this work, we focus on the second step -- pruning the subspace to find the list. The only non-trivial algorithm for this task is the one introduced in~\cite{KRSW23},\footnote{In fact, the KRSW algorithm works for any distance $\Delta$ linear code when decoding up to radius $\Delta-\eps$.} which we summarize next. Given a basis for an affine subspace $\calH$ of dimension $k>0$ that contains the list $\calL(g,1-R-\eps)$, the KRSW algorithm picks a coordinate $i\in [n]$ uniformly at random, and considers the new affine subspace\footnote{$\calH_i$ could also be empty, but that doesn't affect the argument.} 
\[
	\calH_i = \{f\in \calH ~:~ f(i) = g(i)\} \mper
\]
Using the distance of the code, it is easy to show that $\Pr{i\in [n]}{\dim(\calH_i) < k} \geq 1-R$, and for any $h\in \calL(g,1-R-\eps)$, 
\[
	\Pr{i\in [n]}{h\in \calH_i} \geq R+\eps
\]
The algorithm then recurses on $\calH_i$, which, with probability at least $\eps$, both contains $h$ and is of dimension at most $k-1$. This continues until the dimension reduces to zero. With probability at least $\eps^k$, this 0-dimensional affine subspace must contain $h$.

For FRS codes, we can choose $k=O(\nicefrac{1}{\eps})$, and that gives a success probability of at least $\eps^{O(\nicefrac{1}{\eps})}$. This algorithm is then repeated $2^{\tildeO{\nicefrac{1}{\eps}}}$ times so that every list codeword is discovered with high probability.
The KRSW algorithm has the following two shortcomings:
\begin{itemize}
	\item The KRSW algorithm is randomized. The only deterministic algorithm we know is the trivial one, which takes $n^{O(\nicefrac{1}{\eps})}$ time, despite the list being of size only $O(\nicefrac{1}{\eps})$. Can we get a deterministic algorithm that runs in time $F(\eps)\cdot n^{O(1)}$, with the exponent of $n$ being a fixed constant independent of $\eps$, and $F(\eps)$ being some function of $\eps$? This question was mentioned as an open problem in \cite{Sri25, CZ25, LLMWX25}.
	\item The runtime of the KRSW algorithm depends exponentially on $\nicefrac{1}{\eps}$. 
	With the list size improvements from~\cite{Sri25, CZ25}, one could expect a pruning algorithm that runs in time $O( n + \nicefrac{1}{\eps})$, or at least a \emph{fully polynomial time} algorithm that runs in time $(n+\nicefrac{1}{\eps})^{O(1)}$. 
	This would allow the use of FRS as capacity achieving codes even in the regime when $\eps$ is subconstant. This question appears as Open Problem 17.4.3 in the book by Guruswami, Rudra, and Sudan~\cite{GRS23}, and was also asked in \cite{GR21, GHKSS25, BCDZ25}.
\end{itemize}
In this work, we resolve both of these problems.
Note that the first question makes sense while pruning for general linear codes, while the second question only makes sense for FRS codes. This is because a $\poly(\nicefrac{1}{\eps})$ list size is only known for the special case of FRS codes.
\subsubsection{Deterministic Pruning}

Our main result in this direction is a deterministic procedure that can prune a constant-dimensional subspace in near-linear time. More precisely, for any linear code of block length $n$, alphabet size $q$, and distance $\Delta$, the procedure takes as input a basis for a $k$-dimensional affine subspace containing the list $\calL(g,\Delta-\eps)$, and outputs the list $\calL(g,\Delta-\eps)$ in time $(\nicefrac{1}{\eps})^{2^k}\cdot O(n)\cdot \polylog(q)$.

Combining this with the results of \cite{Gur11} and \cite{GHKS24}, we get the following result for deterministic decoding of FRS codes.
\begin{theorem}[Informal version of \cref{cor:deterministic-frs}]
	For all \(\varepsilon> 0\) and \(R \in (0, 1)\), there exists an infinite family of FRS codes of rate \(R\) that can be list decoded up to radius \(1 - R - \varepsilon\) in time \(\parens*{\nicefrac{1}{\eps}}^{O(2^{1/\eps})} \cdot \widetilde{O}(n)\) by a deterministic algorithm, where $n$ is the block length of the code.
%
\end{theorem}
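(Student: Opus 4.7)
The plan is to prove the theorem by composing two black boxes: a basis-finding routine that isolates an $O(1/\eps)$-dimensional affine subspace containing the list, and the deterministic pruning routine that is the main contribution of this paper. Given the pruning routine's guarantee of runtime $(1/\eps)^{2^k}\cdot O(n)\cdot \polylog(q)$ on a $k$-dimensional affine subspace, the theorem reduces to a parameter calculation.

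First, I would instantiate the standard FRS construction from \cite{GR08} with folding parameter $\Theta(1/\eps^2)$ and alphabet size $q = n^{O(1/\eps^2)}$, so that for rate $R$ and decoding radius $1-R-\eps$, Guruswami's structural theorem \cite{Gur11} guarantees that the list $\calL(g, 1-R-\eps)$ lies in an affine subspace $\calH \subseteq \F_q^n$ of dimension $k = O(1/\eps)$. Then, given a received word $g$, I would invoke the Goyal--Harsha--Kumar--Shankar basis-finding algorithm \cite{GHKS24} to produce a basis for $\calH$ in time $(1/\eps)^{O(1)}\cdot \tildeO{n}$. This is already a near-linear-time deterministic computation, so no new ideas are needed for this phase.

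Second, I would feed the basis of $\calH$ directly into the paper's deterministic pruning procedure. Plugging $k = O(1/\eps)$ into the pruning runtime $(1/\eps)^{2^k}\cdot O(n)\cdot \polylog(q)$, and using $\polylog(q) = \polylog(n)\cdot \mathrm{poly}(1/\eps) = \tildeO{1}$ in $n$, this phase runs in time $(1/\eps)^{O(2^{1/\eps})}\cdot \tildeO{n}$. This dominates the basis-finding phase, so summing the two phases yields the claimed total runtime, proving the theorem.

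The only real obstacle is the deterministic pruning routine itself, which is the central technical contribution of the paper rather than part of this corollary. Modulo that, the theorem is a routine composition. My intuition for the doubly exponential $2^{1/\eps}$ factor is that, in the absence of KRSW's random coordinate, any deterministic descent through $\calH$ seems to need to branch over a collection of candidate coordinate-value pairs whose size grows with the current dimension, producing a recursion of roughly the form $T(k) \leq (1/\eps)^{\Theta(1)}\cdot T(k-1)^2$, which solves to $(1/\eps)^{O(2^k)}$. Any smarter pruning would improve the final exponent, but the composition argument above requires only the stated $(1/\eps)^{2^k}$ bound as a black box.
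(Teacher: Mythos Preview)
Your proposal is correct and matches the paper's proof essentially step for step: the paper sets $s = 1/\eps$ and folding parameter $m = s^2$, invokes \cite{GHKS24} (Theorem~\ref{thm:fast-affine-subspace-basis}) to extract an $O(1/\eps)$-dimensional subspace containing the list in $n\cdot\poly(1/\eps,\log n)$ time, and then feeds this to the \textsc{DetPrune} routine (Theorem~\ref{thm:deterministic-dec}), whose $(2/\eps)^{O(2^k)}\cdot Nm\cdot\polylog(q)$ runtime with $k=O(1/\eps)$ gives the claimed bound. The only minor slip is notational---in the paper $q$ is the field size (with $q=\Theta(n)$) rather than the alphabet size $q^m$---but this does not affect the argument.
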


This gives the first family of codes that is deterministically decodable to capacity in near-linear time $\tildeO{n}$. The previous best deterministic algorithm from \cite{KRRSS21} could only achieve a runtime of $n^{1+o(1)}$, and was based on a much more complicated construction than usual FRS codes.

Beyond parameters, we remark that the algorithmic framework for list decoding pioneered by Guruswami~\cite{Gur11} and Guruswami-Wang~\cite{GW13} works by first establishing a set of linear constraints such that any list codeword must satisfy them, and the list is then extracted out of the solution to this linear system. This framework has no way of distinguishing between two true list codewords and their affine combinations, even if the affine combinations no longer lie in the Hamming ball. Thus, under this popular decoding framework \cite{BHKS23}, the task of searching over the subspace for the true list (which is of much smaller size than the search space) seems unavoidable.

On the other hand, for any subspace of dimension $k$, the only deterministic algorithm we know to search over this space is based on a brute force search, which takes time $\Omega(n\cdot q^k)$. Even for $k=1$, which is necessary for any decoding radius beyond $\frac{1-R}{2}$, this translates to a quadratic time algorithm. 

\paragraph{Attempt 1.} In \cite{Sri25}, an approach was suggested to deal with the $k=1$ case in near-linear time. Given a 1-dimensional subspace $\calH$, the idea is to compute for each coordinate $i \in [n]$, the set $\calH_i$ of codewords in $\calH$ that agree with $g$ on the $i^{\textnormal{th}}$ coordinate. That is,
\[
	\calH_i = \{f\in \calH ~:~ f(i) = g(i)\} \mper
\]
As argued before for the KRSW algorithm, it must be the case that for any $h\in \calL(g,1-R-\eps)$, $\calH_i = \{h\}$ in at least $\eps$ fraction of positions. At this point, one can simply look for the $\nicefrac{1}{\eps}$ most frequent codewords that appear among the 0-dimensional spaces in $\{\calH_i\}_{i\in [n]}$.

However, this strategy fails when the dimension $k$ becomes 2. The issue is that while the KRSW argument will still say that there is an $\eps$ fraction of coordinates where $\dim(\calH_i) < 2$ and $h\in \calH_i$, these $\calH_i$ containing $h$ need not be the same subspace. If there were a single 1-dimensional subspace (line) containing $h$ in most of these coordinates, then we could have restricted our search to such \emph{popular} lines. However, there are $q$ many 1-dimensional affine subspaces (lines) containing $h$, and the $\eps n$ coordinates could have corresponded to different lines, all passing through $h$.

\paragraph{Attempt 2.} The other approach one could consider is to use expander edges to derandomize the KRSW argument. Recall that to prune a $2$-dimensional space, the KRSW algorithm chooses $2$ independent coordinates from $[n]$, or equivalently, a uniformly random sample from $[n]^2$, and asks whether there is a unique codeword agreeing with $g$ on both these coordinates. It is a standard derandomization technique to replace $[n]^2$ with the set of edges of a spectral expander graph on $n$ vertices. These graphs can be made to have only $O(n)$ edges, while still having the property that $\Pr{(u,v)\in E}{u\in S, v\in T} \approx \frac{|S|}{n}\cdot\frac{|T|}{n}$ for any pair of subsets \(S, T\) of vertices.

Fix an $h \in \calL(g,1-R-\eps)$, and define $S_h = \{i\in [n]:\dim(\calH_i)<2, h(i) = g(i)\}$. As before $|S_h|\geq \eps \cdot n$. Suppose it holds that the lines $\{\calH_i\}_{i\in S_h}$ are all distinct from each other. Then for any edge $(u,v)$ such that $u,v\in S_h$ but $u\neq v$, it holds that $\calH_u \cap \calH_v = \{h\}$. By the expander property, the fraction of such edges is $\Omega(\eps^2)$, and we can quickly find $\{h\}$ as a popular labeling if we label each edge $(u,v)$ with $\calH_u\cap\calH_v$. Such a labeling can be done in time $O(n)$ because the number of edges is $O(n)$.

\paragraph{Final Algorithm.} The two approaches above only seem to work assuming few distinct lines in $S_h$ or many distinct lines in $S_h$, respectively. Our final algorithm is a win-win analysis, that combines both these approaches by first recursively decoding on each popular line, and then it uses expander edges on the remaining coordinates to see if $h$ pops out as a 0-dimensional space on the edges.

We remark that naively replacing all pairs $[n]^2$ by expander edges in the KRSW argument does not seem easy to analyze. Suppose we wish to look at the number of edges $(u,v)\in S_h \times S_h$ such that $\calH_u \cap \calH_v = \{h\}$. For any $u\in S_h$, let $T_{hu}$ be the set of $v\in S_h$ with this property that $\calH_u \cap \calH_v = \{h\}$.  The key issue is that this set $T_{hu}$ changes with $u$.

From the KRSW argument, we know that for each $u$, the set $T_{hu}$ has size at least $\eps n$, which is sufficient for a randomized algorithm to succeed. But for a deterministic algorithm, where each $u$ is only connected to constantly many vertices in the sparse expander, it could be that the neighborhood of every $u$ is disjoint from corresponding $T_{hu}$, and so we never see $\{h\}$ written on any expander edge.

\subsubsection{Fully Polynomial Time Pruning}

Our second main result gives a randomized algorithm for decoding FRS codes that runs in time $\poly(\nicefrac{1}{\eps})\cdot\tildeO{n}$.
\footnote{By the generalized Singleton bound \cite{ST20, Rot24}, the list size at decoding radius $1-R-\eps$ must be of size $\Omega(\nicefrac{1}{\eps})$, which means $\Omega(n+\nicefrac{1}{\eps})$ is a natural lower bound on any list decoder's running time.}

\begin{theorem}[Informal version of \cref{cor:randomized_frs}]
	For all \(\varepsilon> 0\) and \(R \in (0, 1)\), there exists an infinite family of FRS codes of rate \(R\) that can be list decoded up to radius \(1 - R - \varepsilon\) in time \(\poly\parens*{\frac{1}{\eps}} \cdot \widetilde{O}(n)\) by a randomized algorithm, where $n$ is the block length of the code.
\end{theorem}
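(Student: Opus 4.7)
The plan is to combine the efficient subspace-finding routine of Goyal, Harsha, Kumar, and Shankar~\cite{GHKS24} with a new randomized pruning procedure. I would first invoke~\cite{GHKS24} on the corrupted codeword $g$ to compute, in randomized time $\poly(1/\eps)\cdot\tildeO{n}$, a basis of an affine subspace $\calH\sub\F_q^n$ of dimension $k=O(1/\eps)$ that is guaranteed to contain the list $\calL=\calL(g, 1-R-\eps)$. The remaining task is to prune $\calH$ down to $\calL$ in time $\poly(1/\eps)\cdot\tildeO{n}$, crucially using the fact from~\cite{Sri25, CZ25} that $|\calL|=L=O(1/\eps)$.

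For the pruning step, the goal is to avoid the $\eps^{-k}=\eps^{-\Omega(1/\eps)}$ blow-up of the standard KRSW recursion, whose per-trial success probability of $\eps^k$ arises from requiring $k$ independent random coordinate choices to simultaneously reduce the dimension and retain a target codeword $h$. My approach would instead be a breadth-first, parallelized variant of KRSW. I would maintain a working set $\mathcal{W}_t$ of candidate affine subspaces of dimension at most $k-t$, starting with $\mathcal{W}_0=\{\calH\}$. At each level $t$, for each $\mathcal{K}\in\mathcal{W}_t$, I would sample a batch of $m=O(\log(kL)/\eps)$ random coordinates and, for each sampled $i$, add $\mathcal{K}_i=\{f\in\mathcal{K}: f(i)=g(i)\}$ to $\mathcal{W}_{t+1}$ whenever its dimension is strictly smaller. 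The batch size ensures that for each target $h\in\calL\cap\mathcal{K}$, some sampled coordinate witnesses both a dimension drop and the retention of $h$ with probability $\geq 1-1/(kL)^{O(1)}$, and a union bound over the $O(kL)=O(1/\eps^2)$ relevant events gives overall success with high probability. After $k$ levels every surviving subspace is a singleton codeword, and testing each for membership in $\calL$ takes $\tildeO{n}$ time.

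The main obstacle I anticipate is controlling the size of the working set, which naively multiplies by $m$ at each level and would blow up to $m^k=\poly(1/\eps)^{\Omega(1/\eps)}$, ruining the polynomial dependence on $1/\eps$. I would need an aggressive pruning rule that keeps $|\mathcal{W}_t|\leq\poly(1/\eps)$ per level while still retaining, for every $h\in\calL$, at least one subspace containing $h$. A natural attempt is to use a ``nested'' analogue of the list-size bounds of~\cite{Sri25, CZ25}: one would want to show that only $\poly(1/\eps)$ distinct affine subspaces of $\calH$ arising as coordinate-restrictions can each support a distinct list codeword, and then use a fast sampling-based filter to discard the rest. Establishing such a structural rigidity statement at every recursion level, and proving that it can be enforced algorithmically in $\tildeO{n}$ time per candidate, is the technical heart of the argument. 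Once achieved, the overall runtime is $k\cdot\poly(1/\eps)\cdot m\cdot\tildeO{n}=\poly(1/\eps)\cdot\tildeO{n}$, as required.
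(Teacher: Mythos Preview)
Your first step (invoking \cite{GHKS24} to obtain the $O(1/\eps)$-dimensional affine subspace) matches the paper. The divergence is in the pruning step, and there is a genuine gap.

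You correctly identify that the breadth-first KRSW variant blows up to $m^k = \poly(1/\eps)^{\Omega(1/\eps)}$ working subspaces, and you propose to tame this with a ``nested'' structural rigidity statement: that only $\poly(1/\eps)$ coordinate-restricted subspaces can each support a list codeword, and that this can be enforced in $\tildeO{n}$ time per level. But this is precisely the crux of the problem, and you have not proved it. Even granting the $O(1/\eps)$ list size of \cite{Sri25,CZ25}, a single list codeword $h$ can lie in many distinct subspaces $\calK_i$ at a given level, and you have no mechanism to detect which subspaces are ``live'' without already knowing $h$. Keeping a random subset loses $h$ with constant probability per level; keeping all subspaces reintroduces the exponential blowup. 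The structural lemma you posit is not in the literature and you give no argument for it.

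The paper avoids this obstacle entirely by a different idea: it keeps the \emph{depth-first} single-path recursion of KRSW but changes the sampling distribution. Instead of picking $i\in[N]$ uniformly, it picks $i$ with probability proportional to (roughly) $s\cdot\dim(\calH_i)+1$. The key input is the Guruswami--Kopparty lemma $\sum_i \dim(\calH_i) \le k\cdot\frac{mR}{m-k+1}\cdot N$, which says that for FRS the dimension typically drops by much more than $1$. With this non-uniform, adaptive sampling one can prove by induction on $k$ that any fixed $h\in\calL\cap\calH$ is output with probability at least $\frac{1}{sk+1} = \Omega(\eps^2)$, rather than $\eps^k$. Repeating $\poly(1/\eps)$ times then recovers the full list. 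Your proposal never uses the GK16 expected-dimension bound, which is exactly the FRS-specific structure that makes the polynomial dependence possible.
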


To the best of our knowledge, this is the first example of a code achieving list decoding capacity that can be decoded in time $\poly(n,\nicefrac{1}{\eps})$.  This allows for the use of capacity-achieving codes even in the regime when $\eps$ is needed to be subconstant, allowing for list decoding from a $\left(1-\frac{1}{\poly(n)}\right)$ radius in $\poly(n)$ time.

The bottleneck in previous decoders \cite{KRSW23, GR21} has always been the pruning step, and we show that for FRS codes, this pruning can be done in time that depends polynomially on $\nicefrac{1}{\eps}$. Of course, such an algorithm also implicitly proves that the list size depends polynomially on $\nicefrac{1}{\eps}$. Since the only two known proofs of $\poly(\nicefrac{1}{\eps})$ list size for FRS codes are \cite{Sri25} and \cite{CZ25}, it is not surprising that our algorithm is heavily based on \cite{Sri25}. 

Suppose we are given an integer $s\geq 1$. Then, provided that the folding parameter of the code is sufficiently large, our algorithm can decode up to radii arbitrarily close to $\frac{s}{s+1}(1-R)$ by outputting \emph{any} codeword in the list with probability at least $1/s^2$. This implies that the list must be of size at most $s^2$, and repeating this process allows us to discover every element in the list with high probability. Finally, choosing $s\approx \nicefrac{1}{\eps}$ allows us to decode near the list decoding capacity. 

We note that for small values of $s$ such as $s=2$, when by the Chen-Zhang result we know that the list contains at most 2 codewords, our algorithm outputs each of these codewords with probability at least $\nicefrac{1}{4}$. In particular, this probability is independent of the gap to the radius $\frac{2}{3}(1-R)$. The reader is referred to \cref{thm:randomized-success} for a technical statement. 

We believe that this algorithm could be practical and is easy to implement. Further, the gains in the decoding radius are ``front-loaded'', with the biggest increase happening for small values of $s$. While we need a large folding parameter to approach the radius $\frac{s}{s+1}(1-R)$, even a small folding parameter (say, $s$-folded) will allow for a radius beyond the unique decoding threshold $\frac{1-R}{2}$ for reasonably low rates.

We further note that a recent work of Brakensiek, Chen, Dhar, and Zhang~\cite{BCDZ25} also gives an algorithm for decoding FRS codes with a polynomial dependence on $\nicefrac{1}{\eps}$. However, that result is only applicable for semi-adversarial errors. For such semi-adversarial errors, the list is likely to contain a unique codeword even for large decoding radii, and the question of pruning does not arise. In contrast, our result shows that the exponential dependence on $\nicefrac{1}{\eps}$ in previous decoders was not necessary, even in the fully adversarial error model.

\paragraph{Proof Technique.}

Our algorithm is based on turning the combinatorial proof of $O(\nicefrac{1}{\eps^2})$ list size in \cite{Sri25} into an efficient algorithm. The proof in \cite{Sri25} is itself an improved combinatorial analysis of the KRSW argument in the setting of FRS codes. The key idea is that conditioning a coordinate to be an agreement is likely to reduce the dimension of the search space by much more than 1, due to the folded nature of the code. In fact, roughly speaking, it is known from the work of  Guruswami and Kopparty~\cite{GK16} that $\Ex{i\in [n]}{\dim(\calH_i)} \leq R\cdot \dim(\mathcal{H})$.

Inspired by this combinatorial proof, we make the following three changes to the KRSW algorithm:
\begin{enumerate}[(i)]
	\item The key idea in \cite{Sri25} was a stronger inductive hypothesis with the list size depending linearly on $k$ instead of exponentially. This allowed the inductive step to directly use the \cite{GK16} result on expected dimension, instead of a tail bound derived from it. Inspired by that, we strengthen the inductive hypothesis so that our algorithmic procedure, when searching over a $k$-dimensional affine subspace, outputs any $h$ in the list with probability at least $\eps/k$. Note that this is significantly better than the $\eps^k$ bound in the KRSW algorithm.
	\item To make the inductive step work however, we can no longer pick coordinates uniformly at random as in the KRSW algorithm. Instead, we pick coordinates based on the distribution of $\dim(\calH_i)$ over $i$.
	\item Finally, recall that the KRSW algorithm picks multiple coordinates and hopes that the dimension reaches zero. Our algorithm is recursive in nature, where it picks only one coordinate $i$ and issues a recursive call to prune the resulting subspace $\calH_i$, until the dimension of the search space hits zero. This means that if one were to unroll the recursion, the different coordinates are picked according to different distributions. In fact, if the first coordinate is $i$, the distribution of the second coordinate depends on $i$ (actually, only on $\calH_i$).
\end{enumerate}

We think of the non-uniform and adaptive procedure described in (ii) and (iii) above as the key insight in our proof. With these changes, our proof largely follows the one in \cite{Sri25}.

\subsection{Related Work}

Since the work of Guruswami and Rudra~\cite{GR08} showing that FRS codes can be list decoded up to capacity, there has been a lot of work on obtaining improvements in various parameters. A non-exhaustive list includes works that use algebraic-geometric (AG) codes~\cite{Gur09, GX12, GX22, GR21}, subspace-evasive sets \cite{DL12, GX13, GK16, GR21}, and tensoring \cite{GGR09, HRW20, KRRSS21}. In particular, Guo and Ron-Zewi~\cite{GR21} were able to achieve a fully polynomial time decoding algorithm, except for the final pruning step, which was still exponential in $\nicefrac{1}{\eps}$.

Several other code families, beyond FRS and AG codes, have also been shown to achieve list decoding capacity, with most of these relying on the interpolation machinery from \cite{Gur11}. This includes univariate multiplicity codes \cite{GW13, Kop15}, (bivariate) linear operator codes \cite{BHKS23, PZ25}, and subcodes of Reed-Solomon codes \cite{GX13, BST25}. Given the common proof ingredients, it is possible that our pruning algorithms may be applicable to some of these codes.

Among these, univariate multiplicity codes share many properties with FRS codes, and all the results we use \cite{Gur11, GW13, KRSW23, GHKS24, Sri25, CZ25} also extend to univariate multiplicity codes. Our algorithms also extend to give improved deterministic and randomized decoders for these univariate multiplicity codes, but we choose to focus on the FRS codes for the sake of brevity.

As mentioned before, it was only recently shown \cite{Sri25, CZ25} that FRS codes have a list of size $\poly(\nicefrac{1}{\eps})$, unlocking the fully polynomial time algorithm of this work. Since then, Jeronimo, Mittal, Srivastava, and Tulsiani~\cite{JMST25} were able to obtain purely combinatorial constructions of capacity-achieving codes based on expander graphs that also have the optimal list size $O(\nicefrac{1}{\eps})$. This matches the FRS codes, and in fact, its alphabet size is a constant independent of the block length. A near-linear time randomized\footnote{The algorithms in \cite{ST25} can also be made deterministic, but then they no longer remain near-linear time.} algorithm to decode these expander-based codes was also found in~\cite{ST25}, but this algorithm does not depend polynomially on $\nicefrac{1}{\eps}$.

\begin{sloppypar}
  Finally, we mention that optimal $O(\nicefrac{1}{\eps})$ list size is also known for several random ensembles of codes, such as randomly evaluated Reed-Solomon codes~\cite{ST20, BGM23, GZ23, AGL24}, random linear codes~\cite{AGL24}, random AG codes~\cite{BDG24, BDGZ24}, etc. However, no efficient algorithms are known for these random codes.  
\end{sloppypar}

\subsection{Discussion}

An obvious open problem left by our work is to get a fully polynomial time \emph{deterministic} algorithm for decoding FRS codes. As pointed out above, our randomized algorithm can be seen as picking coordinates non-uniformly and adaptively, and it would be interesting to see a derandomization of this strategy. Another interesting question is whether there is a natural derandomization of the KRSW algorithm based on the algebraic structure of affine subspaces and linear codes, instead of based on expander graphs. That is, can we use this algebraic structure to guide the choice of coordinates to condition on?

The proof in \cite{Sri25} that we algorithmize starts from the assertion of \cite{Gur11, GW13} that the list is contained in a constant-dimensional affine subspace, which itself is based on certain objects called folded Wronskian determinants. The pruning steps in \cite{Sri25} and our work also depend on folded Wronskian determinants via the Guruswami-Kopparty lemma \cite{GK16} on the average dimension (see \cref{lemma: GK16}). In contrast, the Chen-Zhang result \cite{CZ25} works directly with folded Wronskian to get the sharp $O(\nicefrac{1}{\eps})$ list size in one shot. Is there a way to directly algorithmize the Chen-Zhang result?

Continuing the above, as long as our algorithms are only based on solving linear systems of equations, it is unavoidable to deal with linear/affine subspaces.
One way to avoid the pruning procedure is to flip the order of the two steps, where we first randomly pick constantly many coordinates, and then look for the hopefully \emph{unique} codeword that agrees with the received word in all those coordinates. For starters, can we make the KRSW argument work in this manner? It could be of use that the KRSW choice of coordinates is agnostic to the affine subspace we are trying to prune.

One way to convert a list decoding problem into a unique decoding problem is to convert some errors into \emph{erasures}, as was used heavily by Jeronimo \textit{et al}.~\cite{JMST25}. That is, if we could fix the locations of the $1-R-\eps$ errors, then there is at most one codeword consistent with such an error pattern. Of course, there are many ways to fix these locations, and this corresponds to the many codewords in list decoding. Can we view the KRSW style conditioning on random coordinates as allowing us to convert errors into erasures, thereby providing a more transparent and combinatorial view into the list decoding behavior of algebraic codes?
\section{Preliminaries}

\paragraph{Error-Correcting Codes.} We begin this section by reviewing some basic definitions about error-correcting codes.
\begin{definition}[Distance]
	Let $\Sigma$ be a finite alphabet and let $f,g\in \Sigma^n$. Then the (normalized) Hamming distance
        between $f,g$ is defined as \[\Delta(f,g) = \prob_{i\sim [n]}[{f(i) \neq g(i)}]. \]
\end{definition}

Throughout the entire paper, we will always be working with the \emph{normalized} distance as above.
\begin{definition}[Code, distance and rate]
	A code $\calC$ of block length $n$, distance $\Delta$ and rate $R$ over an alphabet $\Sigma$ is a set $\calC \subseteq \Sigma^n$ with the following properties
	\begin{enumerate}[(i)]
		\item $R = \frac{\log_{|\Sigma|} |\calC|}{n}$
		\item $\Delta = \min_{\substack{h_1,h_2\in \calC \\ h_1\neq h_2}} \Delta(h_1,h_2)$
	\end{enumerate}
\end{definition}

\begin{definition}[$\F_q$-Linear Codes]
    A code $\calC$ of block length $N$ is said to be $\F_q$-linear if $\Sigma = \F_q^m$ for some prime power $q$, and $\calC$ is a linear subspace of the linear space $\Sigma^N$ over $\F_q$.
\end{definition}

\begin{definition}[List of codewords]
	Let $\calC$ be a code with alphabet $\Sigma$ and block length $n$. Given $g\in \Sigma^n$, we use $\calL(g,\eta)$ to denote the list of codewords from $\calC$ whose distance from $g$ is less than $\eta$. That is,
	\[
		\calL(g,\eta) = \inbraces{ h\in \calC : \Delta(g,h) <\eta}\mper
	\]
\end{definition}

	We say that a code is combinatorially list decodable up to radius $\eta$ if for every $g\in \Sigma^n$, $\calL(g,\eta)$ is of size at most $\poly(n)$. Likewise, we say a code is algorithmically list decodable up to radius $\eta$ if it is combinatorially list decodable up to $\eta$, and the list $\calL(g,\eta)$ can be found in time $\poly(n)$.

    \paragraph{Folded Reed-Solomon Codes.} Our main results concern Folded-Reed Solomon codes, which we now formally define.  
\begin{definition}[Folded Reed-Solomon Codes]
	Let $\F_q$ be a field with $q>n$, and $\gamma$ be an element of order at least $n$. The encoding function for the $m$-folded Reed-Solomon code $\frs$ with rate $R$, blocklength $N = n/m$, alphabet $\F_q^m$, and distance $1-R$ is denoted as $\efrs : \F_q[X]^{<Rn} \rightarrow (\F_q^m)^N$, given by
	\[
		f(X) \mapsto \insquare{ \begin{pmatrix}
	f(1) \\
	f(\gamma)\\
	\vdots \\
	f(\gamma^{m-1})
	\end{pmatrix}, 
	\begin{pmatrix}
	f(\gamma^m) \\
	f(\gamma^{m+1})\\
	\vdots \\
	f(\gamma^{2m-1})
	\end{pmatrix},
	\cdots ,\begin{pmatrix}
	f(\gamma^{n-m}) \\
	f(\gamma^{n-m+1})\\
	\vdots \\
	f(\gamma^{n-1})
	\end{pmatrix} } \in (\F_q^m)^{N}
	\]
	The code $\frs$ is given by
	\[
		\frs = \inbraces{ \efrs(f(X)) : f(X) \in \F_q[X]^{<Rn}}.
	\]
\end{definition}

Throughout this paper, we will rely on the following structural theorem due to Guruswami~\cite{Gur11}, which asserts that the list of codewords close to any received word for the folded Reed-Solomon code lies within a low-dimensional affine subspace.

\begin{theorem}[\cite{Gur11}]
\label{thm:list-in-affine-subspace}
Let \(\mathcal{C}^{\textnormal{FRS}}\) be an \(m\)-folded Reed-Solomon code of blocklength \(N = n/m\) and rate \(R\). For any integer \(s\) satisfying \(1 \leq s \leq m\), and any received word \(g \in (\mathbb{F}_q^m)^N\), there exists an affine subspace \(\mathcal{H} \subseteq \mathcal{C}^{\textnormal{FRS}}\) over \(\mathbb{F}_q\) of dimension at most \(s - 1\) such that
\[
\mathcal{L}\left(g, \frac{s}{s+1} \left(1 - \frac{mR}{m - s + 1}\right)\right) \subseteq \mathcal{H}.
\]
\end{theorem}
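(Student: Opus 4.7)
The plan is to follow the interpolation approach of Guruswami~\cite{Gur11} (refining Parvaresh-Vardy): construct a nonzero $s$-variate interpolant $Q$, show that every list codeword satisfies a polynomial identity derived from $Q$, and then bound the dimension of the solution set using a folded-Wronskian nonvanishing lemma.

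First, I look for a nonzero interpolant $Q(X, Y_1, \ldots, Y_s) = A_0(X) + \sum_{j=1}^{s} A_j(X) Y_j$ with $\deg A_0 \leq D + k - 1$ and $\deg A_j \leq D - 1$ for $j \geq 1$, where $k = Rn$ is the message length and $D$ is a parameter to be chosen. For each folded position $i \in \{0, \ldots, N-1\}$ and each window offset $j \in \{0, \ldots, m-s\}$, I impose the homogeneous linear constraint $Q(\gamma^{im+j}, g_{i,j}, g_{i,j+1}, \ldots, g_{i,j+s-1}) = 0$, where $g_{i,j}$ is the $j$-th entry of the $i$-th folded symbol of $g$. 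With $(s+1)D + k$ unknowns and $N(m-s+1)$ constraints, picking $D$ minimal so that $(s+1)D + k > N(m-s+1)$ ensures a nonzero $Q$ exists and yields $D + k - 1 \leq \frac{N(m-s+1) + sk}{s+1}$.

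Second, for $f \in \F_q[X]^{<k}$ define $P_f(X) := A_0(X) + \sum_{j=1}^{s} A_j(X) f(\gamma^{j-1} X)$, which has degree at most $D + k - 1$. If $f$ lies in the list, so $\Delta(\efrs(f), g) < \eta := \frac{s}{s+1}\bigl(1 - \frac{mR}{m-s+1}\bigr)$, then $f$ agrees with $g$ on more than $(1-\eta)N$ folded positions, each contributing $m - s + 1$ distinct zeros of $P_f$ (distinct because $\gamma$ has order at least $n$). A direct substitution using $k = RNm$ verifies $(1-\eta)N(m-s+1) > \frac{N(m-s+1) + sk}{s+1} \geq D + k - 1$, forcing $P_f \equiv 0$.

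Third, let $\calH := \{f \in \F_q[X]^{<k} : P_f \equiv 0\}$, which by the second step contains the list (identified with codewords via the $\F_q$-linear bijection $\efrs$). Being the solution set of a linear system, $\calH$ is affine; if nonempty, $\calH = f_0 + V$ where $V$ consists of $h$'s satisfying the homogeneous equation $\sum_{j=1}^{s} A_j(X) h(\gamma^{j-1} X) \equiv 0$. To show $\dim V \leq s - 1$, I assume WLOG that not all $A_j$ with $j \geq 1$ vanish (otherwise $A_0 \not\equiv 0$ and $\calH$ is empty), and suppose for contradiction that $h_1, \ldots, h_s \in V$ are linearly independent. The $s \times s$ matrix $M(X)$ with entries $M_{ji}(X) = h_i(\gamma^{j-1} X)$ has every column annihilated by the nonzero row vector $(A_1(X), \ldots, A_s(X))$, forcing $\det M \equiv 0$. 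But $\det M$ is the folded Wronskian of $h_1, \ldots, h_s$, which is known to be a nonzero polynomial whenever its arguments are linearly independent polynomials of degree less than the order of $\gamma$, a contradiction.

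The main obstacle I anticipate is this last step: the nonvanishing of the folded Wronskian is the technical engine driving all modern FRS list decoding, and it is usually proved via a careful induction on $s$ tracking leading terms under the shift $f(X) \mapsto f(\gamma X)$. Everything else is a controlled degree count combined with the sliding-window structure of the FRS encoding.
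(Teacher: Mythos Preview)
The paper does not prove this theorem; it is stated with the citation \cite{Gur11} and used as a black box throughout. Your proposal reproduces the standard interpolation-plus-folded-Wronskian argument from that reference, and the outline is correct.

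One small correction to the writeup of step two: the direct substitution actually gives equality,
\[
(1-\eta)\,N(m-s+1) \;=\; \frac{N(m-s+1)+sk}{s+1},
\]
not a strict inequality. The strictness that forces $P_f \equiv 0$ comes instead from the fact that a list codeword agrees with $g$ on \emph{strictly more} than $(1-\eta)N$ folded positions, yielding strictly more than $D+k-1$ distinct roots of a polynomial of degree at most $D+k-1$. With that adjustment the chain of inequalities closes correctly.
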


In our applications, the existence of such a subspace \(\mathcal{H}\) is not sufficient -- we also require an efficient method to compute it. The following result, due to Goyal \textit{et al.}~\cite{GHKS24}, guarantees the existence of a near-linear time algorithm to find a basis for this subspace.

\begin{theorem}[\cite{GHKS24}]
\label{thm:fast-affine-subspace-basis}
Let \(g \in (\mathbb{F}_q^m)^N\) be a received word, and let \(s\) be an integer with \(1 \leq s \leq m\). Let \(\mathcal{H} \subseteq \mathcal{C}^{\textnormal{FRS}}\) be the affine subspace guaranteed by~\cref{thm:list-in-affine-subspace}, satisfying
\[
\mathcal{L}\left(g, \frac{s}{s+1} \left(1 - \frac{mR}{m - s + 1}\right)\right) \subseteq \mathcal{H}.
\]
Then a basis \((h_0, h_1, \ldots, h_{s-1})\) for \(\mathcal{H}\) can be computed using at most \(\widetilde{O}(mN)\cdot \textnormal{poly}(s)\) field operations over \(\mathbb{F}_q\), where \(\mathcal{H} = \left\{h_0 + \sum_{i=1}^{s-1} \alpha_i h_i : \alpha_i \in \mathbb{F}_q \text{ for all } i \right\}\).
\end{theorem}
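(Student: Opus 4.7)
The plan is to follow the interpolate-and-solve template of Guruswami~\cite{Gur11} and Guruswami--Wang~\cite{GW13}, but implement each of its two steps using fast structured linear algebra so that the total cost is $\widetilde{O}(mN)\cdot\mathrm{poly}(s)$ rather than the naive quadratic or cubic bound.

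\textbf{Step 1: fast interpolation.} First, I would seek a nonzero polynomial
\[
Q(X, Y_1, \ldots, Y_s) = A_0(X) + A_1(X) Y_1 + \cdots + A_s(X) Y_s
\]
with $\deg A_i \le D$ for $i \ge 1$ and $\deg A_0$ slightly larger, obeying the vanishing condition $Q(\gamma^{jm+\ell}, g_{j,\ell}, \ldots, g_{j,\ell+s-1}) = 0$ for every block $j \in [N]$ and shift $0 \le \ell \le m-s$. Choosing $D = \Theta(mN/(s+1))$ guarantees a nonzero solution by dimension counting. The resulting system has $\Theta(mN)$ equations and unknowns, so a direct Gaussian elimination would be prohibitive; however, the evaluation points form a geometric progression $\{\gamma^i\}$ and the $g$-values are reused across many equations, giving the coefficient matrix a block-Vandermonde structure. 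Using subproduct trees and fast multipoint evaluation, the system can be assembled in $\widetilde{O}(mN)$ time, and a nonzero kernel vector extracted by recasting the problem as a simultaneous Hermite--Pad\'e approximation and running a Beckermann--Labahn style divide-and-conquer routine in $\widetilde{O}(mN)\cdot\mathrm{poly}(s)$.

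\textbf{Step 2: fast subspace extraction.} Once the $A_i$ are in hand, any list codeword $f \in \F_q[X]^{<Rn}$ must satisfy the functional identity
\[
A_0(X) + \sum_{t=1}^{s} A_t(X)\, f(\gamma^{t-1}X) \equiv 0
\]
as a polynomial in $X$, by the standard Guruswami--Wang argument: the left-hand side has $X$-degree at most $D+Rn-1 < mN$ while vanishing at all $mN$ evaluation points. Writing $f(X) = \sum_{i<Rn} c_i X^i$, the coefficient of $X^d$ in the identity above is a linear equation in the $c_i$ with support on a window of width $O(s)$ (the scalars $\gamma^{(t-1)i}$ act as per-column rescalings). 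The resulting banded shifted-convolution system is again naturally phrased as a Hermite--Pad\'e / minimal-polynomial-basis computation, and can therefore be solved in $\widetilde{O}(mN)\cdot\mathrm{poly}(s)$ time, producing a basis $(h_0, h_1, \ldots, h_{s-1})$ of its solution space. That this space has dimension at most $s-1$ and contains the promised list is the content of \cref{thm:list-in-affine-subspace}, so we may take it to be the desired $\mathcal{H}$.

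\textbf{Main obstacle.} The crux is executing both steps in time near-linear in $mN$ rather than quadratic. This requires identifying the right structural form at each step -- block-Vandermonde for interpolation, banded shifted-convolution for root-finding -- and threading FFT-based polynomial arithmetic, subproduct trees, and a fast Hermite--Pad\'e solver through both, while keeping the overhead in $s$ polynomial. A secondary subtlety is that the solution space in Step~2 could a priori be larger than the $\mathcal{H}$ of \cref{thm:list-in-affine-subspace}; the Guruswami counting argument, applied with $D$ chosen with a little slack, shows its dimension is at most $s-1$, so no further pruning is needed before handing $(h_0, \ldots, h_{s-1})$ over to the pruning algorithms that form the main contribution of this paper.
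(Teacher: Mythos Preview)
The paper does not prove this theorem at all: it is stated as a preliminary result imported from \cite{GHKS24}, with no proof given beyond a one-sentence remark that the cited algorithm actually outputs a basis for $\efrs^{-1}(\calH)$ (as coefficient vectors of polynomials), and that one then applies $\efrs$ to each basis element at an additional cost of $s\cdot\widetilde{O}(mN)$ field operations via fast multipoint evaluation. So there is nothing in the paper for your proposal to be compared against.

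That said, your sketch is a reasonable reconstruction of what the cited algorithm in \cite{GHKS24} does: the two-phase interpolate-then-solve template of \cite{Gur11,GW13}, with each phase accelerated by recognizing the structured (Vandermonde-like / banded) linear algebra and invoking fast Hermite--Pad\'e or minimal-basis routines. One point to be careful about: in Step~2 you assert that the solution space of the functional equation has dimension at most $s-1$ and can be taken to be exactly the $\calH$ of \cref{thm:list-in-affine-subspace}. Strictly speaking, \cref{thm:list-in-affine-subspace} only promises the existence of \emph{some} such $\calH$, and the particular $Q$ you interpolate might in principle give a larger solution space; the actual dimension bound comes from the Guruswami--Wang analysis of the specific linear system, not from \cref{thm:list-in-affine-subspace} as a black box. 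This is a minor phrasing issue rather than a real gap, since the two are proved together in \cite{Gur11,GW13}.
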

{Strictly speaking, the algorithm in~\cite{GHKS24} outputs a basis of \(\efrs^{-1}(\mathcal{H})\) by giving the coefficients of \(k\) polynomials in \(\mathbb{F}_q[X]^{<Rn}\). However, we can simply encode these polynomials using \(\efrs\) with an additional \(s\cdot \widetilde{O}(mN)\) field operations~\cite{fiduccia1972polynomial}.} 

Note that \cref{thm:list-in-affine-subspace} already implies a list size bound of at most \(q^{s-1}\). The following result due to Chen and Zhang~\cite{CZ25} asserts that the list is much smaller.

\begin{theorem}
    \label{thm:FRS-gen-singleton=bound}
    Let \(\mathcal{C}^{\textnormal{FRS}}\) be an \(m\)-folded Reed-Solomon code of blocklength \(N = n/m\) and rate \(R\). For any integer \(s\) satisfying \(1 \leq s \leq m\), and any received word \(g \in (\mathbb{F}_q^m)^N\), we have:
    \[\left\lvert \mathcal{L}\left(g, \frac{s}{s+1} \left(1 - \frac{mR}{m - s + 1}\right)\right)\right\rvert\leq s.\]
\end{theorem}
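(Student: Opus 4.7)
I will argue by contradiction using the folded Wronskian machinery. Suppose $\abs{\calL(g, r)} \geq s + 1$ for the stated radius $r = \frac{s}{s+1}\bigl(1 - \frac{mR}{m-s+1}\bigr)$, and fix $s + 1$ distinct codewords $f_0, f_1, \ldots, f_s$ in the list, which I identify with their underlying polynomials in $\F_q[X]^{<Rn}$. By \cref{thm:list-in-affine-subspace} these all lie in a common affine subspace of dimension at most $s - 1$, so the $s$ differences $g_j := f_j - f_0$ span a vector subspace of dimension $d \leq s - 1$; after relabeling I may assume $g_1, \ldots, g_d$ form a basis.

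The main object is the folded Wronskian $W(X) := \det\bigl[g_j(\gamma^{i-1} X)\bigr]_{i,j=1}^{d}$. Two facts drive the argument: first, by the folded Wronskian non-vanishing lemma (cf.~\cite{GK16}), the linear independence of $g_1,\ldots,g_d$ forces $W \not\equiv 0$; and second, since each entry of the Wronskian matrix has degree less than $Rn$, we get $\deg W < d \cdot Rn$. The plan is to lower-bound the zeros of $W$ counted with multiplicity using the coding-theoretic agreement structure, and to contradict the degree bound. Let $B \subseteq [N]$ be the set of blocks on which every one of $f_0, \ldots, f_s$ agrees with $g$; a union bound yields $\abs{B} \geq (1-(s+1)r)N$. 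On each block $i \in B$, every $g_j$ vanishes at all $m$ consecutive evaluation points of the block. Evaluating $W$ at $X_0 = \gamma^{im+\ell}$, the $r$th row of the Wronskian matrix becomes identically zero whenever $\gamma^{im+\ell+r-1}$ still lies in a block that is in $B$; a standard row-subtraction / factoring-out-$(X - X_0)$ argument then produces a zero of $W$ at $X_0$ of multiplicity at least the number of such rows. Summing these multiplicities across $\ell \in \{0, \ldots, m-1\}$ and $i \in B$ and balancing against $\deg W < dRn$ converts to a lower bound on $r$.

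The main obstacle is to tune this accounting tightly enough to extract the specific radius $r = \frac{s}{s+1}\bigl(1 - \frac{mR}{m-s+1}\bigr)$. A naive per-block multiplicity count of $\min(d, m - \ell)$ yields only the weaker bound $r > \frac{1}{s+1}\bigl(1 - O(R)\bigr)$, which is off by roughly a factor of $s$ from the target. Extracting the sharp bound requires two refinements working in tandem: (i) exploiting the clustering of $B$-blocks, so that whenever both block $i$ and block $i+1$ lie in $B$ one obtains the full per-block multiplicity $dm$ rather than the boundary-truncated expression $\frac{d(2m - d + 1)}{2}$; and (ii) squeezing the factor of $s$ into the numerator by using the information from all $s+1$ codewords rather than only the $d + 1$ affinely independent ones -- for instance, by replacing the $d \times d$ Wronskian with an $s \times s$ analogue obtained from a suitably reorganised family of polynomials, or by iterating the argument across the affine subspace so that the dimension bound from \cref{thm:list-in-affine-subspace} is fully saturated. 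Combining these two refinements along the lines of Chen-Zhang~\cite{CZ25} produces a per-block degree contribution of order $s(m - s + 1)$ (matching the denominator in the stated radius), and the inequality then rearranges to give the required contradiction.
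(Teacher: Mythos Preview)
The paper does not prove this theorem: it appears in the preliminaries attributed to Chen--Zhang~\cite{CZ25} and is quoted without proof, so there is no argument in the paper to compare your proposal against.

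Treated on its own, your plan has a genuine gap at the step you yourself flag. The $d\times d$ folded Wronskian of a basis $g_1,\ldots,g_d$, combined with the common agreement set $B=\bigcap_{j}A_j$, uses only the affine dimension $d\le s-1$ (obtained from \cref{thm:list-in-affine-subspace}) and the size of the intersection of the agreement sets. The inequality this produces has the form $r>\frac{1-dR'}{d+1}$ for a suitable $R'\approx R$; at best it controls the affine dimension of the list rather than its cardinality, and it is off from the target $\frac{s}{s+1}(1-R')$ by a factor of roughly $s$ in the numerator. Refinement~(i) on block clustering can sharpen the dependence on $m$ (the $m$ versus $m-s+1$ issue) but cannot manufacture the missing factor of~$s$. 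Refinement~(ii) is the whole theorem, and you do not actually carry it out: passing to an $s\times s$ Wronskian of $g_1,\ldots,g_s$ fails outright when $d<s$ because those polynomials are linearly dependent and the determinant is identically zero, while the alternatives you list (``a suitably reorganised family,'' ``iterating across the subspace,'' ``along the lines of~\cite{CZ25}'') are placeholders, not arguments. Obtaining the sharp bound requires a construction in which all $s+1$ codewords and their \emph{individual} agreement sets with $g$ contribute to the zero count, not just the $d$-dimensional span and the intersection~$B$; until that construction and its degree-versus-roots balance are actually written down, what you have is a correct diagnosis of where the difficulty lies rather than a proof.
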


\paragraph{Expanders. }
Our deterministic algorithm uses expanders and the expander mixing lemma, defined below, as derandomization tools.

\begin{definition}[Expanders]
    A $d$-regular graph $G$ on $n$ vertices is an $(n,d,\lambda)$-expander if the magnitude of eigenvalues $1 = \card{\lambda_1} \ge \card{\lambda_2} \ge \ldots \ge \card{\lambda_n}$ of its normalized adjacency matrix $\frac{1}{d}A_G$ satisfy $\lambda \ge \card{\lambda_i}$ for all $i > 1$.
\end{definition}

\begin{lemma}[Expander Mixing Lemma]
\label{lem:expander-mixing}
    Let $G$ be an $(n,d,\lambda)$-expander.
    Then
    \begin{align*}
        \card{\Ex{(u,v) \sim E(G)}{f(u)g(v)} - \Ex{(u,v) \sim V(G)^2}{f(u)g(v)}} \le \lambda \sqrt{\Ex{u \sim V(G)}{f^2(u)}}\sqrt{\Ex{v \sim V(G)}{g^2(v)}},
    \end{align*}
    where $f,g : V(G) \rightarrow \mathbb{R}$ are real-valued functions. In many applications, $f,g$ are indicator functions, so that their range is $\{0,1\}$.  
\end{lemma}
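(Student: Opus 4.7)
The plan is to prove the Expander Mixing Lemma via a short spectral calculation, rewriting both expectations as quadratic forms in the normalized adjacency matrix $M = \frac{1}{d}A_G$ and then using the spectral gap. View $f$ and $g$ as column vectors in $\mathbb{R}^{V(G)}$. Sampling $(u,v)$ from $E(G)$ is equivalent to picking $u$ uniformly from $V(G)$ and $v$ uniformly from the neighbors of $u$, which gives the identity $\Ex{(u,v) \sim E(G)}{f(u)g(v)} = \frac{1}{n} f^T M g$. Sampling from $V(G)^2$ independently gives $\Ex{(u,v) \sim V(G)^2}{f(u)g(v)} = \frac{1}{n^2}(\mathbf{1}^T f)(\mathbf{1}^T g) = \frac{1}{n} f^T J g$, where $J = \frac{1}{n} \mathbf{1}\mathbf{1}^T$ is the orthogonal projector onto the span of the all-ones vector.

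The next step is to argue that $M - J$ has spectral norm at most $\lambda$. Since $G$ is $d$-regular, $\mathbf{1}$ is an eigenvector of $M$ with eigenvalue $1$, so $J$ projects onto the top eigenspace of $M$. Consequently $M - J$ vanishes on $\mathbf{1}$ and agrees with $M$ on $\mathbf{1}^\perp$, where by the definition of an $(n,d,\lambda)$-expander all eigenvalues are bounded in magnitude by $\lambda$. Hence $\|M - J\|_{\mathrm{op}} \le \lambda$.

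Combining the above with Cauchy-Schwarz yields $|f^T(M - J)g| \le \|f\|_2 \cdot \|(M - J)g\|_2 \le \lambda \|f\|_2 \|g\|_2$. To finish, I would convert the $\ell_2$ norms back to expectation form via $\|f\|_2 = \sqrt{n} \cdot \sqrt{\Ex{u \sim V(G)}{f^2(u)}}$ and the analogous identity for $g$, then divide through by $n$; the two factors of $\sqrt{n}$ cancel against the $\frac{1}{n}$ prefactor in the quadratic form expressions, giving exactly the bound claimed in the lemma.

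There is no substantive obstacle here; the only point that requires care is keeping the normalizations straight, since the statement mixes expectation-level quantities (where the $\frac{1}{n}$ and $\frac{1}{n^2}$ factors are hidden) with the vector-level inner products that arise naturally when invoking spectral bounds. Once the translation between these two viewpoints is made explicit, the proof reduces to the single inequality $\|M - J\|_{\mathrm{op}} \le \lambda$ followed by Cauchy-Schwarz.
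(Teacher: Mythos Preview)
Your proof is correct and is exactly the standard spectral argument for the Expander Mixing Lemma. The paper does not actually prove this lemma; it is stated in the preliminaries as a known result and used as a black box, so there is no ``paper's own proof'' to compare against.
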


We specifically use the following family of expanders.

\begin{proposition}[Gaber-Gallil Expander \cite{GG81}]
\label{prop:gg}
    The Gaber-Gallil expander is an $(n,8,\lambda)$-expander where $\lambda \le 5\sqrt{2}/8$.
    Each edge of the expander can be found using $O(1)$ operations in $\Z_{\sqrt{n}}$.
\end{proposition}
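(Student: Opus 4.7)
The plan is to exhibit Gabber and Galil's explicit construction and to verify the two claims (8-regularity with the spectral bound, and $O(1)$ edge computation) separately. For the construction, I would take the vertex set to be $\Z_{\sqrt{n}} \times \Z_{\sqrt{n}}$ and connect each $(x,y)$ to the eight points of the form $(x+\sigma y + \tau,\, y)$ and $(x,\, y + \sigma x + \tau)$ with $\sigma \in \{+1,-1\}$ and $\tau \in \{0,1\}$, all arithmetic taking place in $\Z_{\sqrt{n}}$. This graph is manifestly 8-regular on $n$ vertices, and each neighbor is a constant-sized arithmetic expression in $(x,y)$, so the $O(1)$-operation claim is immediate from the algebraic definition of the neighbor rule.

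For the spectral bound $\lambda \le 5\sqrt{2}/8$, the strategy is to decompose the normalized adjacency operator as $\tfrac{1}{8}A = \tfrac{1}{2}(T_{\mathrm{row}} + T_{\mathrm{col}})$, where $T_{\mathrm{row}}$ averages over the four row-shearing neighbors and $T_{\mathrm{col}}$ over the four column-shearing neighbors. I would then Fourier-analyze on the abelian group $\Z_{\sqrt{n}}^2$: the characters $\chi_{(a,b)}(x,y) = \omega^{ax+by}$ (with $\omega$ a primitive $\sqrt{n}$-th root of unity) are not individually eigenvectors of $T_{\mathrm{row}}$ or $T_{\mathrm{col}}$ because the shears couple $x$ and $y$; however, each $T_i$ preserves the invariant subspaces obtained by grouping characters into orbits of its shear action on the dual group. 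Within each orbit the restricted operator can be written down explicitly and its norm bounded by a trigonometric/exponential-sum estimate. Combining the two bounds via $\|T_{\mathrm{row}} + T_{\mathrm{col}}\| \le \|T_{\mathrm{row}}\| + \|T_{\mathrm{col}}\|$ and restricting to the orthogonal complement of the trivial character $\chi_{(0,0)}$ yields the claimed bound on the second-largest eigenvalue magnitude.

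The main obstacle is extracting the sharp constant $5\sqrt{2}/8$, which requires the careful exponential-sum analysis that is the principal technical content of \cite{GG81}; neither a soft subadditivity argument nor a crude triangle inequality recovers this particular constant. In the present paper the Gabber-Galil expander is used purely as an off-the-shelf derandomization primitive whose only relevant consequence is the expander mixing lemma (\cref{lem:expander-mixing}) with some $\lambda$ bounded away from $1$ by an absolute constant. Consequently, rather than reproduce Gabber-Galil's computation we import the proposition as a black box, referring the reader to the original paper for the quantitative spectral estimate.
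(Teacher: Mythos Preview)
Your proposal is correct and matches the paper's treatment: the proposition is stated without proof in the paper, imported directly from \cite{GG81} as a black-box primitive. You go somewhat further by writing out the explicit construction and sketching the Fourier-analytic route to the spectral bound before ultimately deferring to \cite{GG81} for the sharp constant, which is helpful exposition but not required.
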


\begin{observation}[Powered Gaber-Gallil Expander]
\label{obs:gg-power}
    For any $\lambda > 0$, an $(n, \poly(\lambda^{-1}), \lambda)$-expander $G$ can be constructed in time $O(\log(\lambda^{-1})\card{E(G)})$.
\end{observation}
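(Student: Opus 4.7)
The plan is to take the base Gaber-Gallil expander $G_0$ from \cref{prop:gg} and build $G$ as (a suitable walk-interpretation of) the $k$-th power of $G_0$, for $k = \Theta(\log(\lambda^{-1}))$. Concretely, set $k = \lceil \log(\lambda^{-1}) / \log(8/(5\sqrt{2})) \rceil$, and let $G$ be the graph on the same vertex set $V(G_0) = [n]$ with an edge for each length-$k$ walk in $G_0$ (multi-edges allowed). Since $G_0$ is $8$-regular, $G$ is $8^k$-regular, and by the choice of $k$, its degree is $8^k = \poly(\lambda^{-1})$, as required by the statement.

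For the spectral bound, I would appeal to the standard fact that if $M$ is the normalized adjacency matrix of $G_0$, then the normalized adjacency matrix of $G$ is exactly $M^k$. Consequently, the eigenvalues of $G$ are the $k$-th powers of those of $G_0$. From \cref{prop:gg}, every non-trivial eigenvalue of $G_0$ has magnitude at most $5\sqrt{2}/8 < 1$, so every non-trivial eigenvalue of $G$ has magnitude at most $(5\sqrt{2}/8)^k \leq \lambda$ by our choice of $k$. Hence $G$ is an $(n, 8^k, \lambda)$-expander with $8^k = \poly(\lambda^{-1})$.

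For the construction time, I would observe that \cref{prop:gg} gives each edge of $G_0$ in $O(1)$ operations, so a length-$k$ walk starting at any vertex can be generated in $O(k)$ time by repeated neighbor-lookups. Enumerating all edges of $G$ therefore amounts to generating, from each vertex $u \in [n]$, all $8^k$ walks of length $k$ out of $u$, which costs $O(k \cdot n \cdot 8^k) = O(k \cdot \card{E(G)})$ time. Since $k = O(\log(\lambda^{-1}))$, this matches the claimed bound $O(\log(\lambda^{-1}) \cdot \card{E(G)})$.

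There isn't really a hard step here; the whole argument is a textbook application of graph powering. The only thing one has to be slightly careful about is choosing the walk interpretation (allowing multi-edges and self-loops) so that the normalized adjacency matrix of $G$ is literally $M^k$ and the eigenvalue identity is exact; alternative constructions that collapse parallel walks into simple edges would lose this clean spectral identity and require a more careful bound.
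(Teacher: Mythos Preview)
Your proposal is correct and takes essentially the same approach as the paper: power the Gaber-Gallil base expander $\Theta(\log(\lambda^{-1}))$ times, use the standard eigenvalue-powering fact for the spectral bound, and enumerate edges as length-$k$ walks at cost $O(k)$ each. You spell out the constants and the multigraph/walk interpretation more explicitly than the paper does, but the argument is identical.
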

\begin{proof}
    Let $G_0$ be the Gaber-Gallil $(n, d_0=8, \lambda_0 \le 5\sqrt{2}/8)$-expander.
    Observe that $d_0 = O(\lambda_0^{-17}) = \textrm{poly}(\lambda_0^{-1})$.
    We can thus take $G$ to be the $\Theta(\log\lambda^{-1})$-th power of $G_0$.
    The edges of $G$ comprise $\Theta(\log\lambda^{-1})$ step walks in $G_0$, each of which can be found using $O(\log\lambda^{-1})$ operations in $\Z_{\sqrt{n}}$.
\end{proof}

\paragraph{Miscellanea. }
We implicitly use the following basic fact about affine spaces and algorithmically computing the reduced row echelon form (RREF) of matrices.

\begin{fact}
\label{fact:affine-intersection}
    Let $\calH, \calH'$ be affine spaces.
    Then $\calH \cap \calH'$ is either an affine space or empty.
\end{fact}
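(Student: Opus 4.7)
The plan is to verify this by reducing to the standard characterization of an affine subspace as a translate of a linear subspace. An affine space $\calH$ in an ambient vector space $V$ (over $\F_q$, in our setting) can be written as $\calH = u + W$ for some $u \in V$ and some linear subspace $W \leq V$, and similarly $\calH' = u' + W'$. The first step is simply to fix such representations.

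Now I would split into two cases. If $\calH \cap \calH' = \emptyset$, there is nothing to prove. Otherwise, pick any $p \in \calH \cap \calH'$. A standard observation about affine subspaces is that the base point can be chosen to be any element of the subspace, i.e. $\calH = p + W$ and $\calH' = p + W'$ (writing $u = p + w$ for some $w \in W$, one sees $u + W = p + w + W = p + W$, and similarly for $\calH'$).

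With this common base point in hand, I would check directly that
\[
\calH \cap \calH' \;=\; (p + W) \cap (p + W') \;=\; p + (W \cap W'),
\]
since $x \in \calH \cap \calH'$ iff both $x - p \in W$ and $x - p \in W'$, which is equivalent to $x - p \in W \cap W'$. Because $W \cap W'$ is a linear subspace of $V$, the set $p + (W \cap W')$ is by definition an affine subspace, completing the argument.

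There is essentially no obstacle here; this is a textbook linear-algebra fact, and the only mild subtlety is the base-change step that lets us use the same anchor point $p$ for both affine spaces. An alternative one-line argument, which I might actually prefer for brevity, is to use the affine-combination characterization: if $x, y \in \calH \cap \calH'$ and $\alpha \in \F_q$, then $\alpha x + (1 - \alpha) y$ lies in each of $\calH$ and $\calH'$ (both being closed under affine combinations), hence in the intersection, so the nonempty intersection is itself closed under affine combinations and is therefore an affine subspace.
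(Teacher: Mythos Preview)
Your proof is correct. Both arguments you give---the translate-of-a-linear-subspace computation $\calH \cap \calH' = p + (W \cap W')$ and the closure-under-affine-combinations check---are standard and valid.

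The paper itself does not prove this statement; it records it as a basic fact and moves on. So there is nothing to compare against, and either of your arguments would serve perfectly well as a justification. The first is slightly more informative in that it also identifies the direction subspace of the intersection as $W \cap W'$, which is sometimes useful downstream (e.g., for dimension counts), though the paper does not need that here.
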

    

\begin{proposition}
\label{prop:rref}
    Let $M$ be an $m$ by $n$ matrix over a field.
    The Reduced Row Echelon Form of $M$ can be computed in $O(mn^2)$ time using $O(mn^2)$ field operations.
\end{proposition}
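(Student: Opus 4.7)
The plan is to invoke the standard Gauss--Jordan elimination algorithm and carefully count its arithmetic cost. I would describe the algorithm as processing the columns of $M$ from left to right while maintaining a ``current row'' index $r$, initially $1$. For column $j = 1, 2, \ldots, n$: scan entries $M[r, j], M[r+1, j], \ldots, M[m, j]$ for a nonzero value; if none exists, leave $r$ unchanged and move on; otherwise, swap the pivot row into position $r$, scale row $r$ so that $M[r,j] = 1$, and then for every other row $i \neq r$, subtract $M[i,j] \cdot (\text{row } r)$ from row $i$ to zero out column $j$ outside position $(r,j)$. After this, increment $r$ and move to the next column. When the loop terminates, the matrix is in RREF.

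Next I would count the work column by column. For a fixed column $j$, the pivot search costs $O(m)$ comparisons, the pivot swap costs $O(n)$ field operations, rescaling the pivot row costs $O(n)$ field operations, and the elimination step touches each of the $m-1$ other rows, performing $O(n)$ field operations per row for a total of $O(mn)$. Summing over all $n$ columns gives $O(mn \cdot n) = O(mn^2)$ field operations, which is also the total bit-operation / time count assuming a unit cost model for field operations. The output matrix is precisely the RREF because the algorithm enforces the defining properties: every pivot equals $1$, pivots form a strictly increasing staircase pattern across rows, and each pivot column has zeros in every other entry.

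I do not anticipate any real obstacles; this is a textbook analysis and I would probably present it in two short paragraphs followed by a one-line cost total. The only minor care points are (i) handling columns that yield no pivot without charging extra, which is immediate since such columns still cost only $O(m)$ for the failed pivot search and $O(mn)$ in total across all such columns, and (ii) confirming that we never need to revisit an earlier column, since the pivot-column elimination step zeros out entries both above and below the pivot in one pass, which is exactly what distinguishes Gauss--Jordan (producing RREF) from plain Gaussian elimination (producing row echelon form). With these conventions, the $O(mn^2)$ bound holds simultaneously for the number of field operations and the running time in the unit-cost RAM model, as claimed.
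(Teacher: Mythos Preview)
Your proposal is correct; the Gauss--Jordan argument with the per-column $O(mn)$ cost is exactly the standard textbook analysis. Note that the paper itself does not supply a proof of this proposition at all---it is stated without proof in the Preliminaries as a basic fact---so there is nothing to compare against beyond confirming that your argument is sound, which it is.
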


Finally, we use the following simple observation when analyzing our deterministic algorithm.

\begin{observation}
\label{obs:event-partition}
    Let $E_1, E_2, \ldots, E_t$ be disjoint events such that $\Pr{\bigcup_{i \in [t]} E_i} = p$ and $\Pr{E_i} \le p/2$ for all $i \in [t]$.
    There is a partition $X \sqcup Y = [t]$ such that $p/4 \le \Pr{\bigcup_{i \in Z} E_i} \le 3p/4$ for all $Z \in \set{X,Y}$.
\end{observation}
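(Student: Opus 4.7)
The plan is to reduce to a purely numerical statement about dividing positive reals. Since the events $E_1,\dots,E_t$ are disjoint, for any subset $Z \subseteq [t]$ we have $\Pr{\bigcup_{i \in Z} E_i} = \sum_{i \in Z} p_i$, where $p_i \defeq \Pr{E_i}$. So the hypothesis becomes $\sum_{i \in [t]} p_i = p$ with $p_i \le p/2$ for each $i$, and the goal is to produce $X \sqcup Y = [t]$ with $\sum_{i \in X} p_i$ (and hence $\sum_{i \in Y} p_i = p - \sum_{i \in X} p_i$) both lying in $[p/4, 3p/4]$.

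I would prove this by a sliding-window argument on partial sums. Fix an arbitrary ordering $\sigma$ of $[t]$ and define $S_0 = 0$, $S_j = \sum_{\ell \le j} p_{\sigma(\ell)}$, so $S_t = p$. Each increment satisfies $S_j - S_{j-1} = p_{\sigma(j)} \le p/2$. Let $j^\star$ be the largest index with $S_{j^\star} \le p/4$ (this exists because $S_0 = 0$, and if $p = 0$ the statement is vacuous). Then by maximality $S_{j^\star+1} > p/4$, and by the jump bound
\[
S_{j^\star+1} \;=\; S_{j^\star} + p_{\sigma(j^\star+1)} \;\le\; \tfrac{p}{4} + \tfrac{p}{2} \;=\; \tfrac{3p}{4}.
\]
So $S_{j^\star+1} \in (p/4, 3p/4]$. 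Taking $X = \{\sigma(1),\dots,\sigma(j^\star+1)\}$ and $Y = [t] \setminus X$, we get $\sum_{i \in X} p_i \in [p/4, 3p/4]$, and then $\sum_{i \in Y} p_i = p - \sum_{i \in X} p_i$ also lies in $[p/4, 3p/4]$ by symmetry of the interval around $p/2$.

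There is no real obstacle here; the only subtlety is that the interval $[p/4, 3p/4]$ has length exactly $p/2$, which matches the worst-case jump size. This is precisely why the hypothesis $p_i \le p/2$ is essential: any larger upper bound on the $p_i$'s would allow a single step to leap over the target interval entirely, and any looser conclusion (e.g.\ the interval $[p/4, 3p/4]$ shrinking) would fail on the example $t = 2$, $p_1 = p_2 = p/2$.
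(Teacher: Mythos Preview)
Your proof is correct. The reduction to partial sums is clean, and the key inequality $S_{j^\star+1} \le S_{j^\star} + p/2 \le p/4 + p/2 = 3p/4$ is exactly the right observation. Your handling of the edge case $p=0$ and the remark about tightness are both fine.

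The paper takes a different route: it considers a partition $X \sqcup Y$ minimizing $\max\bigl(\sum_{i\in X} p_i,\ \sum_{i\in Y} p_i\bigr)$ and argues by contradiction. If the larger side, say $X$, had mass exceeding $3p/4$, then $Y$ would have mass below $p/4$; moving any single index $j$ from $X$ to $Y$ drops the $X$-side by $p_j \in (0,p/2]$ and raises the $Y$-side by the same amount, so both sides now lie strictly below the old maximum, contradicting minimality. Your argument is constructive (it hands you the partition as a prefix in any ordering), whereas the paper's is a pure existence argument via an extremal principle. Both hinge on the same arithmetic fact that a single jump of size at most $p/2$ cannot skip over an interval of length $p/2$, so neither is really deeper than the other; yours has the minor advantage of being explicitly algorithmic.
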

\begin{proof}
    We may assume without loss of generality that all events occur with positive probability. 
    
    A partition $X \sqcup Y = [t]$ such that $\max(\Pr{\bigcup_{i \in X} E_i}, \Pr{\bigcup_{i \in Y} E_i})$ is minimized satisfies the claim.
    To see this, suppose for a contradiction that $\max(\Pr{\bigcup_{i \in X} E_i}, \Pr{\bigcup_{i \in Y} E_i}) = \Pr{\bigcup_{i \in X} E_i} > 3p/4$.
    We can move any one index from $X$ to $Y$ to contradict the minimality of $\max(\Pr{\bigcup_{i \in X} E_i}, \Pr{\bigcup_{i \in Y} E_i})$.
\end{proof}

\section{A Deterministic Pruning Algorithm}\label{sec:deterministic}

In this section, we show a deterministic algorithm for list decoding FRS codes up to capacity in near-linear time.
The main technical part is a near-linear time deterministic pruning algorithm for any $\Delta$ distance $\F_q$-linear code $\calC \subseteq (\F_q^m)^N$ that -- when given a received word $g \in (\F_q^m)^N$, a fixed constant $\eps \in (0,\Delta)$, and an affine subspace $\calH \subseteq \calC$ over $\F_q$ -- computes all codewords in $\calH \cap \calL(g, \Delta-\eps)$.

\paragraph{Warm-Up 1: Pruning in the Affine Line. }
Let us begin, as a warm-up, with a procedure for pruning when $\dim(\calH) = 1$.
Assume for simplicity $m = 1$, and that all field operations and comparisons cost one unit of time.
Let $\calH = \set{h_0 + \alpha_1 h_1: \alpha_1 \in \F_q}$.
Naively, one could test if $h_0 + \alpha_1 h_1 \in \calL(g, \Delta-\eps)$ for each $\alpha_1 \in \F_q$, outputting that codeword if so.
This would take $O(Nq)$ time, which, for an RS code, is essentially $O(N^2)$ time.

We can do better with an $O(N/\eps)$ time algorithm as follows.
To each coordinate $i \in [N]$, we associate the space $\calH_i$ defined by the linear equality $(h_0 + \alpha_1 h_1)(i) = g(i)$ where $\alpha_1$ is the variable.
On the set of indices $S = \set{i: h_1(i) \neq 0}$, these spaces are $0$ dimensional and contain only the codeword $h_0 + \alpha_i h_1$ with $\alpha_i = \frac{g(i)-h_0(i)}{h_1(i)}$.
Since $\card{S} \ge \Delta N$, every codeword in $\calH \cap \calL(g, \Delta-\eps)$ must agree with $g$ on at least $\eps N$ coordinates in $S$.
If we hence let each coordinate $i \in S$ ``vote'' for the $0$ dimensional space associated to it, this says that every codeword in $\calH \cap \calL(g, \Delta-\eps)$ receives at least $\eps N$ votes.
Put otherwise, the codewords (or rather spaces) are \emph{$\eps$-popular} with respect to $S$.

\begin{definition}[$p$-popular Spaces]
    Let $V$ be a set where for each $v \in V$ there is an affine space $\calH_v$.
    An affine space $\calH$ is $p$-popular with respect to $V$ if $\Pr{v}{\calH_v = \calH} \ge p$.
\end{definition}

There are at most $1/\eps$ many $\eps$-popular spaces, all of which can be found in $O(\card{S}/\eps)$ time via, say, the Heavy Hitters algorithm of Misra-Gries.

\begin{proposition}[Misra-Gries Heavy Hitters Algorithm \cite{misra1982finding}]
\label{prop:misra-gries}
    The $k$ most frequent elements of a list of length $n$ can be computed in $O(nk)$ time using $O(nk)$ comparisons.
\end{proposition}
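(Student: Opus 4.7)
The plan is to invoke the classical Misra--Gries algorithm, which processes the list in one pass while maintaining a dictionary of at most $k$ (element, counter) pairs. For each incoming element $x$: if $x$ already appears in the dictionary, increment its counter; else if there are fewer than $k$ entries, insert $x$ with counter $1$; else decrement every counter and drop any that hit $0$. A naive implementation of each step performs $O(k)$ comparisons and $O(k)$ operations (searching the dictionary, and possibly decrementing every counter), so the whole pass takes $O(nk)$ comparisons and $O(nk)$ time.

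The key correctness claim is the invariant that every element appearing more than $n/(k+1)$ times in the list survives in the final dictionary with a positive counter. I would prove this by a token-counting argument: imagine each position in the list contributing one unit, and the total ``mass'' in the dictionary equals (insertions so far) minus (units destroyed so far). A ``destruction'' event, which occurs exactly when we hit the third case above, removes $k+1$ units in total (the incoming element plus one from each of the $k$ counters), so the number of such events over the whole list is at most $n/(k+1)$. Each occurrence of an element $x$ either lands in its counter or causes a destruction event, so an element appearing more than $n/(k+1)$ times cannot have all its occurrences absorbed into destructions, and hence must end with positive counter.

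Given the first pass, the dictionary contains at most $k$ candidate elements, and by the invariant it is a superset of the $k$ most frequent elements (assuming all of them appear more than $n/(k+1)$ times, which is the case whenever we only care about $1/k$-heavy-hitters as in \cref{sec:deterministic}). A second linear scan computes the exact frequency of each of the $\le k$ candidates by comparing every list element against each candidate, using $O(nk)$ comparisons. A final $O(k\log k)$ sort of the candidates by true frequency returns the top $k$.

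The only nontrivial step is the token-counting invariant; everything else is bookkeeping on data-structure operations. I would therefore devote the bulk of the write-up to stating the invariant precisely and verifying it by induction on the prefix length processed, then close with a one-line summary of the two-pass complexity totals, $O(nk)$ comparisons and $O(nk)$ time.
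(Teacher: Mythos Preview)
The paper does not prove this proposition at all; it simply states it with a citation to the original Misra--Gries paper and moves on. Your sketch is a correct and standard account of the algorithm and its analysis, so there is nothing to compare against and nothing to fix.

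One remark worth keeping in your write-up: as you already flag parenthetically, the literal phrase ``the $k$ most frequent elements'' is slightly stronger than what Misra--Gries guarantees; the algorithm really returns a size-$k$ superset of all elements with frequency exceeding $n/(k+1)$. In the paper's application (finding all $p$-popular subspaces for some threshold $p$), this is exactly what is needed, so the distinction is harmless here, but it is good that you make it explicit.
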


When $\dim(\calH) = 1$, this finds all $O(1/\eps)$ codewords from $\calH \cap \calL(g, \Delta-\eps)$ in linear time.

\paragraph{Warm-Up 2: Pruning in the Affine Plane. }
Next, we go over pruning in the case $\dim(\calH) = 2$, which contains the key insight to our algorithm.
Assume for simplicity $m = 1$, and that all field operations and comparisons cost one unit of time.
Let $\calH = \set{h_0 + \alpha_1 h_1 + \alpha_2 h_2: \alpha_1, \alpha_2 \in \F_q}$.

Each coordinate $i \in [N]$ now has associated to it the space $\calH_i$ defined by the linear equality $(h_0 + \alpha_1 h_1 +  \alpha_2 h_2)(i) = g(i)$ where $\alpha_1$ and $\alpha_2$ are the variables.
On the set of indices $S = \set{i: h_1(i) \neq 0 \textrm{ or } h_2(i) \neq 0}$, these spaces are $1$ dimensional.
For any codeword $h \in \calH \cap \calL(g, \Delta-\eps)$, let $S_h$ be the set of coordinates in $S$ for which $h$ agrees with $g$.
Then, $\calH_i$ for all $i \in S_h$ are lines that contain $h$.
By taking a pair of these lines, say $\calH_u,\calH_v$ for $u,v \in S_h$, we either have $\calH_u \cap \calH_v = \set{h}$ or $\calH_u \cap \calH_v = \calH_u$ (when the lines are equal).
Intuitively, if the latter case occurs frequently, then there must be a popular line containing $h$; otherwise, we can pseudorandomly recover $h$ by taking the intersection of two distinct lines --- a win-win scenario.

Concretely, we recursively call the line decoder on all $\Omega(\eps)$-popular lines with respect to $S$, which takes $O(N/\eps^2)$ time total.
If $h$ is in one of these popular lines, we have now recovered it here.
Then, let $G$ be a (constant) $d$-regular expander with vertex set $S$ and to all edges $(u,v) \in E(G)$ we associate the space $\calH_{u,v} = \calH_u \cap \calH_v$.
Since $\card{S} \ge \Delta N$, we have $\card{S_h} \ge \eps N$.
In the case where $h$ is not contained in any popular line with respect to $S$, this would mean that the $0$ dimensional space $\set{h}$ is $\Omega(\eps^2)$-popular with respect to $E(G)$; we will later in the general case formally use \Cref{lem:expander-mixing} (Expander Mixing Lemma) and \Cref{obs:event-partition} to make this claim.
Therefore, inspecting all $\Omega(\eps^2)$-popular points with respect to $E(G)$, each in $O(N)$ time, recovers $h$.
Since there are $O(1/\eps^2)$ such popular points, this sums up to $O(N/\eps^2)$ time.

\subsection{Pruning in Higher Dimensional Spaces}
Having covered the main ideas underlying our result, we now turn to the pruning algorithm in its full generality: $\dim(\calH) = k$.
To each coordinate $i \in [N]$, we associate a space and recursively call the $\le k-1$ dimensional decoder on all $\Omega(\eps)$-popular spaces.
We then generalize the approach for decoding on a line by using a hierarchy of expanders $G_{k-2}, G_{k-3}, \ldots, G_0$ where the vertex set of $G_r$ is defined as the edge set of $G_{r+1}$ (i.e. $V(G_r) = E(G_{r+1})$) and the ``bottom'' graph $G_{k-2}$ plays the role of the expander for line decoding, defined on vertex set $[N]$.
Each edge is associated with the intersection of the spaces associated with its endpoints.
The $\le r$ dimensional decoder is then called on all $\Omega(\eps^{k-r})$-popular spaces with respect to $E(G_r)$ for all $r \in \set{k-2,k-3,\ldots,0}$.
The formal details are described in Algorithm~\ref{alg-2}, with the missing low-level details (of how affine subspaces can be efficiently computed and compared) deferred to later.

\begin{tbox}
    \algname{\textsc{DetPrune} -- Pruning for a Distance $\Delta$ $\F_q$-Linear Code $\calC \subseteq (\F_q^m)^N$}
    \label{alg-2}
    \underline{Input}: Received word $g \in (\F_q^m)^N$.\\
        \phantom{Input:} A fixed constant $\eps \in (0,\Delta)$.\\
        \phantom{Input:} An affine subspace $\calH \subseteq \calC$ over $\F_q$ where $\dim(\calH) = k$.\\
    \underline{Output}: $\calL$, a list of all codewords in $\calH \cap \calL(g, \Delta-\eps)$.
    \begin{enumerate}
        \item If $k = 0$ then halt, outputting $\set{f: f \in \calH \cap \calL(g, \Delta-\eps)}$.
        \item $\calL \gets \emptyset$.
        \item For each $i \in [N]$ compute the affine subspace $\calH_i = \set{h \in \calH: h(i) = g(i)}$.
        \item $V_{k-1} \gets \set{i \in [N]: \dim(\calH_i) \le k-1}$.
        \item $p_{k-1} \gets \eps$.
        \item For each $p_{k-1}/2$-popular space $\underset{\textcolor{gray}{\dim \le k-1}}{\calH'}$ with respect to $V_{k-1}$:
            $\calL \gets \calL \cup \textsc{DetPrune}(g, \eps, \calH').$
        \item For $r$ ranging from $k-2$ down to $0$:
            \begin{enumerate}
                \item $G_{r} \gets$ a $(\card{V_{r+1}}, d = \textrm{poly}(\lambda^{-1}), \lambda = p_{r+1}/24)$-expander on vertex set $V_{r+1}$.
                \item For all $(u,v) \in E(G_{r})$: $\calH_{u,v} \gets \calH_u \cap \calH_v$.
                \item $V_{r} \gets \set{(u,v) \in E(G_{r}): \dim(\calH_{u,v}) \le r}$.
                \item $p_{r} \gets p_{r+1}^2/32$.
                \item For each $p_{r}/2$-popular space $\underset{\textcolor{gray}{\dim \le r}}{\calH'}$ with respect to $V_{r}$:
                    $\calL \gets \calL \cup \textsc{DetPrune}(g, \eps, \calH').$
            \end{enumerate}
        \item \label{alg-2:out} Output $\calL$.
    \end{enumerate}
\end{tbox}

The correctness of Algorithm~\ref{alg-2} is shown by the following lemma, which generalizes (and formalizes) the win-win analysis of the plane decoder to a $\underbrace{\textrm{win-(win-(win-}\ldots\textrm{))}}_{k \textrm{ times}}$ analysis.

\begin{lemma}
\label{lem:alg-2-correct}
    Suppose $h \in \calH \cap \calL(g, \Delta-\eps)$.
    Then $h \in \calL$ on Line~\ref{alg-2:out} of \textsc{DetPrune} (Algorithm~\ref{alg-2}).
\end{lemma}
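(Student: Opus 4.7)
The plan is to proceed by induction on $k = \dim(\calH)$. The base case $k = 0$ is immediate, since Step~1 explicitly tests membership in $\calL(g, \Delta - \eps)$ for every element of $\calH$, so $h$ is output whenever it satisfies the hypothesis. For the inductive step, the central claim is that some level $r \in \set{0, 1, \ldots, k-1}$ exhibits an affine space $\calH' \ni h$ with $\dim(\calH') \le r$ that is $p_r/2$-popular with respect to $V_r$. Once this is shown, the algorithm invokes $\textsc{DetPrune}(g, \eps, \calH')$ on this $\calH'$ in Step~6 or Step~7(e), and since $\dim(\calH') < k$, the inductive hypothesis forces $h \in \calL$. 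Throughout, I write $\calH^{(v)}$ for the affine space associated to $v \in V_r$: either $\calH_i$ when $v = i$ is a coordinate, or $\calH_u \cap \calH_{u'}$ when $v = (u, u')$ is an edge.

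To launch the descent, I would first establish the density invariant $\Pr{v \in V_{k-1}}{h \in \calH^{(v)}} \ge p_{k-1} = \eps$. The key combinatorial bound is $\card{[N] \setminus V_{k-1}} \le (1 - \Delta) N$: if $i \notin V_{k-1}$ then $\calH_i = \calH$ has dimension $k$, so any two distinct codewords in $\calH$ (which exist since $k \ge 1$) must agree on coordinate $i$; but two codewords can agree on at most $(1 - \Delta) N$ coordinates by the code's distance. Combined with the fact that $h$ and $g$ agree on more than $(1 - \Delta + \eps) N$ coordinates, this yields $\card{\set{i \in V_{k-1} : h(i) = g(i)}} > \eps N$, and since each such $i$ satisfies $h \in \calH_i$ while $\card{V_{k-1}} \le N$, the invariant follows.

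The inductive engine is a win-win descent from level $r$ to level $r - 1$. If some $\calH' \ni h$ with $\dim(\calH') \le r$ is already $p_r/2$-popular with respect to $V_r$, we are done. Otherwise, the disjoint events $\set{E_{\calH'} : h \in \calH',\ \dim(\calH') \le r}$ -- with $E_{\calH'}$ denoting the event $\calH^{(v)} = \calH'$ for $v \sim V_r$ -- each have probability less than $p_r/2$ while their union has probability $q \ge p_r$. Observation~\ref{obs:event-partition} then supplies a partition into classes $A_1, A_2$ with probability each in $[q/4, 3q/4]$, and applying the Expander Mixing Lemma to $G_{r-1}$ with spectral parameter $\lambda = p_r/24$ yields
\[
    \Pr{(u,v) \in E(G_{r-1})}{\calH^{(u)} \in A_1,\ \calH^{(v)} \in A_2} \ge (q/4)^2 - (p_r/24)(q/2) \ge p_r^2/24 \ge p_{r-1},
\]
using $q \ge p_r$. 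For each such edge, $\calH^{(u)} \neq \calH^{(v)}$ (disjoint classes) while both contain $h$, so $\calH^{(u)} \cap \calH^{(v)}$ is a proper subspace of each, has dimension at most $r - 1$, and contains $h$; thus $(u, v) \in V_{r-1}$ with $h \in \calH^{(u, v)}$, propagating the invariant. The descent terminates at $r = 0$: the only $0$-dimensional space containing $h$ is $\set{h}$, which alone carries density $\ge p_0$, is thus $p_0/2$-popular, and recursing on it hits the base case to output $h$.

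The main delicate step is calibrating $\lambda = p_r/24$ so that the EML error $\lambda \sqrt{\Pr{A_1} \cdot \Pr{A_2}} \le \lambda q/2 = p_r q/48$ stays strictly below the signal $\Pr{A_1}\Pr{A_2} \ge q^2/16$; the inequality $q^2/16 - p_r q/48 \ge p_r^2/24$ (tight at $q = p_r$ and monotone increasing for $q \ge p_r$) makes the recurrence $p_{r-1} = p_r^2/32$ safe even when $q$ is close to $1$. A subsidiary point is that the good edges genuinely lie in $V_{r-1}$, which follows from the dimensional fact that two distinct affine subspaces sharing a common point must intersect in a proper subspace of each.
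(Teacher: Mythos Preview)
Your proof is correct and follows the same approach as the paper: outer induction on $k$, inner reverse induction on $r$ maintaining the density invariant $\Pr{v \in V_r}{h \in \calH^{(v)}} \ge p_r$, with \cref{obs:event-partition} and the Expander Mixing Lemma driving the descent. The only cosmetic differences are that you make the outer induction on $k$ explicit (the paper leaves it implicit in the recursion) and that you bound the EML error term via AM--GM as $\sqrt{\Pr{A_1}\Pr{A_2}} \le q/2$, yielding the slightly sharper $p_r^2/24$ in place of the paper's $p_r^2/32$.
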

\begin{proof}
    We will first show by (reverse) induction on $r$ that if $h$ is not added to $\calL$ in any iteration $\ge r$, then $\Pr{v \sim V_{r-1}}{h \in \calH_v} \ge p_{r-1}$.

    As a base case, note that $\Pr{v \sim V_{k-1}}{h \in \calH_v} \ge \eps = p_{k-1}$.
    For any $r \in [k-1]$, suppose $h$ is not added to $\calL$ in iterations $\ge r$.
    We will show that $\Pr{v \sim V_{r-1}}{h \in \calH_v} \ge p_{r-1}$.

    Let $\mathsf{H} = \set{\calH_v : v \in V_{r} \textrm{ and } h \in \calH_v}$ be the set of all affine subspaces containing $h$ that could be obtained by conditioning on vertices in $V_r$. 
    For each $\calH' \in \mathsf{H}$, define the event $E_{\calH'} = \set{ v \in V_r ~\vert~ \calH_v = \calH'}$. Note that the events $\{E_{\calH'}\}_{\calH' \in \mathsf{H}}$ are disjoint.
    
    Let $p \defeq \Pr{v \sim V_{r}}{\cup_{\calH' \in \mathsf{H}} E_{\calH'}}$, and note that by the induction hypothesis, $p \ge p_{r}$.
    Also observe that there is no $\calH' \in \mathsf{H}$ that is $p/2$-popular with respect to $V_{r}$. This is because, by virtue of $h$ not being added to $\calL$ in iteration $r$, there is no $\calH' \in \mathsf{H}$ that is $p_{r}/2$-popular with respect to $V_{r}$ and $p \ge p_{r}$.
    Therefore, we can use \Cref{obs:event-partition} to extract $\mathsf{H}_1 \sqcup \mathsf{H}_2 = \mathsf{H}$ satisfying 
    \[
        p/4 \le \Pr{v \sim V_{r}}{\cup_{\calH' \in \mathsf{H}_i} E_{\calH'}} = \Ex{v\sim V_r}{\mathds{1}_{\mathsf{H}_i}(\calH_v)} \le 3p/4, \quad \text{for }i\in \set{1,2}.
    \]
    Then, we compute,
    {\allowdisplaybreaks
    \begin{align*}
        \Pr{v \sim V_{r-1}}{h \in \calH_v}
        &
        \ge
        \Pr{(u,v) \sim E(G_{r-1})}{h \in \calH_{u,v} \textrm{ and } \dim(\calH_{u,v}) \le r-1}
        \\
        &
        \ge
        \Ex{(u,v) \sim E(G_{r-1})}{\mathds{1}_{\mathsf{H}_1}(\calH_u)\mathds{1}_{\mathsf{H}_2}(\calH_v)}
        \\
        &
        \ge
        \Ex{(u,v) \sim V_{r}^2}{\mathds{1}_{\mathsf{H}_1}(\calH_u)\mathds{1}_{\mathsf{H}_2}(\calH_v)} - \lambda \sqrt{\Ex{u \in V_{r}}{\mathds{1}_{\mathsf{H}_1}^2(\calH_u)}}\sqrt{\Ex{v \in V_{r}}{\mathds{1}_{\mathsf{H}_2}^2(\calH_v)}}
        \tag{Expander Mixing Lemma (\Cref{lem:expander-mixing})}
        \\
        &
        \ge
        \frac{p^2}{16} - \lambda \frac{3p}{4}
        \tag{$p/4 \le \Pr{v \sim V_{r}}{\cup_{\calH' \in \mathsf{H}_i} E_{\calH'}} \le 3p/4$ for $i\in\set{1,2}$}
        \\
        &
        =
        \frac{p^2}{16} - \frac{p\cdot p_r}{32} \tag{$\lambda = p_r/24$} \\
        &
        \ge
        \frac{p_r^2}{16} - \frac{p_r \cdot p_r}{32}
        \tag{\ding{60}}
        \\
        & = \frac{p_{r}^2}{32} = p_{r-1}.
    \end{align*}
    }
    The inequality (\ding{60}) follows from $p \geq p_r$ and the fact that $\frac{x^2}{16} - \frac{x\cdot p_r}{32}$, as a polynomial in $x$, is increasing in the interval $[p_r,p]$.
    This finishes the induction step.

    Now suppose $h$ does not belong to any $p_r/2$-popular subspace for $r > 0$ (and thus has not been added to $\calL$ in iterations $r > 0$), then we can conclude using the proof above that $\Pr{v\in V_0}{h\in \calH_v} \geq p_0$.
    Since there is a unique $0$-dimensional space containing $h$, namely $\set{h}$, it must be that $\set{h}$ is $p_0$-popular with respect to $V_0$.
    Consequently, $h$ is added to $\calL$ when $r = 0$.
\end{proof}


Algorithm~\ref{alg-2} can be implemented to run in near-linear time, as shown by the following proposition.

\begin{proposition}
\label{prop:alg-2-runtime}
    \textsc{DetPrune} (Algorithm~\ref{alg-2}) can be implemented to use at most $(2/\eps)^{O\paren{2^k}} N m \cdot \mathrm{polylog}(q)$ time and $(2/\eps)^{O\paren{2^k}} Nm$ field operations.
\end{proposition}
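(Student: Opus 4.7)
The plan is to implement \textsc{DetPrune} with a compact representation of affine subspaces, bound the per-level costs, and solve the resulting runtime recurrence by induction on $k$.

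\textbf{Representation and basic operations.} Using the input basis $(h_0, h_1, \ldots, h_{k-1})$ for $\calH$, identify $\calH$ with $\F_q^{k}$ via $(\alpha_1, \ldots, \alpha_{k}) \mapsto h_0 + \sum_i \alpha_i h_i$, so every $\calH' \subseteq \calH$ becomes an affine subspace of $\F_q^k$. Store each such $\calH'$ by the RREF of its defining linear constraints, occupying $O(k^2)$ field elements and supporting $O(k^3)$-time intersection and $O(k^2)$-time equality comparison via \Cref{prop:rref} and \Cref{fact:affine-intersection}. Computing $\calH_i$ for a single $i \in [N]$ reduces to solving $m$ linear equations in the $k$ variables $(\alpha_1, \ldots, \alpha_k)$ extracted from $h_0(i), \ldots, h_{k-1}(i), g(i) \in \F_q^m$, costing $O(mk^2)$ field operations. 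Thus Step 3 computes all of $\{\calH_i\}_{i \in [N]}$ in $O(Nmk^2)$ field operations.

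\textbf{Per-level costs.} At level $r$, we build $G_r$ of degree $d_r = \mathrm{poly}(p_{r+1}^{-1})$ on vertex set $V_{r+1}$ using \Cref{obs:gg-power}, in time $O(|E(G_r)| \log(p_{r+1}^{-1}))$. Since $|V_{k-1}| \le N$ and $|V_r| \le |E(G_r)| = |V_{r+1}| \cdot d_r / 2$, unrolling and using $p_{k-1} = \eps$ together with $p_j = p_{j+1}^2/32$ gives $1/p_r$ and $d_r$ both at most $(32/\eps)^{O(2^{k-1-r})}$, hence $|V_r| \le N \cdot (1/\eps)^{O(2^{k-1-r})}$. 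Computing all $\calH_{u,v}$ at level $r$ costs $O(|V_r| \cdot k^3)$, and running Misra-Gries (\Cref{prop:misra-gries}) to find all $p_r/2$-popular subspaces costs $O(|V_r| \cdot (2/p_r) \cdot k^2)$. Finally, converting a popular subspace back into a codeword basis (to feed into the recursive call) costs $O(Nmk^2)$ per popular subspace.

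\textbf{Recurrence.} Summing over $r$, the non-recursive work is dominated by $r = 0$, giving a base cost of $B(k) = (2/\eps)^{O(2^{k-1})} \cdot Nm \cdot \mathrm{polylog}(q)$, where the $\mathrm{polylog}(q)$ factor accounts for the bit cost of a single field operation in $\F_q$. Bounding the number of popular subspaces at level $r$ by $2/p_r \le (1/\eps)^{O(2^{k-1-r})}$, we obtain
\[
T(k) \le B(k) + \sum_{r=0}^{k-1} (2/\eps)^{O(2^{k-1-r})} \cdot T(r).
\]

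\textbf{Closing the induction.} We prove $T(k) \le (2/\eps)^{C \cdot 2^k} \cdot Nm \cdot \mathrm{polylog}(q)$ for a sufficiently large absolute constant $C$, with base case $T(0) = O(Nm \cdot \mathrm{polylog}(q))$ immediate. Substituting the hypothesis into the recurrence, each summand contributes exponent $O(2^{k-1-r}) + C \cdot 2^r$; this exponent is maximized at the endpoints $r \in \{0, k-1\}$ and equals $(C + O(1)) \cdot 2^{k-1}$ there. Choosing $C$ so that $(C + O(1)) \cdot 2^{k-1} + \log k \le C \cdot 2^k - 1$ uniformly in $k$ (which holds for any $C$ larger than some absolute constant) absorbs both the $k$-fold sum and the base cost $B(k)$ into $(2/\eps)^{C \cdot 2^k}$, completing the induction. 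The main obstacle is precisely this balancing step: the popularity thresholds $p_r$ decay doubly exponentially in $k - r$, so a naive bookkeeping would yield exponents that grow by a factor of $k$ at each level; the key observation making the induction close is that the map $r \mapsto O(2^{k-1-r}) + C \cdot 2^r$ attains its maximum at the endpoints rather than compounding across all $r$, so the recursion cost matches the base cost up to the same $(2/\eps)^{O(2^k)}$ factor.
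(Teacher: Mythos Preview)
Your proposal is correct and follows essentially the same approach as the paper: represent subspaces of $\calH$ by RREF constraint systems in the $k$ affine coordinates, bound the per-level costs using $1/p_r,\, d_r \le (2/\eps)^{O(2^{k-1-r})}$ and $|V_r| \le N\cdot (2/\eps)^{O(2^{k-1-r})}$, and close an induction on $k$. The only notable difference is that the paper bounds every recursive call uniformly by $T(k-1)$ rather than your more refined $T(r)$, obtaining the simpler recurrence $T(k) \le (2/\eps)^{O(2^k)}\bigl(Nm + T(k-1)\bigr)$, which unrolls directly via the geometric series $\sum_{j\le k} 2^j = O(2^k)$ and sidesteps your endpoint-maximum argument.
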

\begin{proof}
    For simplicity, let us first assume that field operations and comparisons cost one unit of time.
    We will later correct for this by multiplying our time bound under such an assumption by $\textrm{polylog}(q)$.

    \paragraph{Implementation.}
    Below, we show how affine subspaces are represented, which is a crucial detail when invoking \Cref{prop:misra-gries}.
    The input space $\calH$ will be represented by a basis $h_0, h_1, \ldots, h_k$ such that $\calH = \set{h_0 + \sum_{r \in [k]}\alpha_r h_r: \vec{\alpha} \in \F_q^k}$.
    Then, each subspace of $\calH$ will be represented by systems of linear equalities in RREF.
    For example, the initial subspaces $\calH_i$ for $i \in [N]$ will be represented by the following system:
    \begin{align*}
        \begin{pmatrix}
            \spike{20pt}{$h_0(i)$} & \spike{20pt}{$h_1(i)$} & \dots & \spike{20pt}{$h_k(i)$}
        \end{pmatrix}
        \begin{pmatrix}
            1\\
            \alpha_1 \\
            \alpha_2 \\
            \vdots \\
            \alpha_k
        \end{pmatrix}
        =
         \begin{pmatrix}
            \spike{20pt}{$g(i)$}
        \end{pmatrix}
    \end{align*}
    -- where the $h_r(i)$'s and $g(i)$ are columns of length $m$ -- which is then turned into an equivalent system with $O(k)$ rows in RREF using $O(mk^2)$ time.
    The intersections $\calH_u \cap \calH_v$ can thereafter be represented by concatenating the systems of linear equalities for $\calH_u$ with $\calH_v$, and then turning it into RREF in $O(k^3)$ time.

    Once we have the representation of $\calH' \subset \calH$ in terms of a system in RREF, we can compare subspaces in time proportional to the size of their representation, which is $O(k^2)$.
    Finally, if we pass in $\calH' \subset \calH$ to a recursive call, we can compute a basis for $\calH'$ by using $h_0, h_1, \ldots, h_k$ and the RREF system associated with $\calH'$ in time $O(k^3)$.

    \paragraph{Running Time.}
    Let $T(N,k)$ upper bound the time and field operations taken by \textsc{DetPrune} on a block length $N$ input and $k$ dimensional space.
    Clearly, $T(N,0) = O(Nm)$.
    We proceed now by induction on $k$.

    The running time is dominated by the main loop running over $r$ where $0 \le r < k-1$.
    Observe that $p_r = (\eps/2)^{O(2^{k-r})}$ and, consequently, $d_r = (2/\eps)^{O(2^{k-r})}$ where $d_r$ is the degree of $G_r$.
    Therefore,
    \begin{align*}
        &
        \card{E(G_r)} \le \card{E(G_0)} \le \underbrace{\paren{N \prod_{0 \le r < k-1} d_r}}_{\ge \card{V_0}} d_0 = N(2/\eps)^{O\paren{2^k}}.
        \tag{\ding{99}}
    \end{align*}
    The cost $T(N,k)$ is thus bounded above by
    \begin{align*}
        k\paren{
        \underbrace{\log(1/p_0)\card{E(G_0)}}_{\substack{\textrm{Expander construction}\\ \mathrm{\Cref{obs:gg-power}}}}
        +
        \underbrace{(2/p_0)\textrm{poly}(k)m\card{E(G_0)}}_{\substack{\textrm{Find popular subspaces}\\ \mathrm{\Cref{prop:misra-gries}}}}
        +
        \underbrace{(2/p_0) T(N,k-1)}_{\textrm{Recursive calls}}
        }
    \end{align*}
    \begin{sloppypar}
            which, using the induction hypothesis, inequality (\ding{99}), and $1/p_0 \le (2/\eps)^{O(2^{k})}$, is in turn bounded above by $(2/\eps)^{O(2^k)} Nm$.
    \end{sloppypar}
\end{proof}

Put together, we have the following main result.
\begin{theorem}
\label{thm:deterministic-dec}
    Let $\calC \subseteq (\F_q^m)^N$ be a $\Delta$ distance $\F_q$-linear code.
    There is a deterministic algorithm \textsc{DetPrune} that takes in a received word $g \in (\F_q^m)^N$, a fixed constant $\eps \in (0,\Delta)$, and an affine subspace $\calH \subseteq \calC$ over $\F_q$ where $\dim(\calH) = k$ and outputs in $(2/\eps)^{O\paren{2^k}} N m \cdot \mathrm{polylog}(q)$ time a list of all codewords in $\calH \cap \calL(g, \Delta-\eps)$.
\end{theorem}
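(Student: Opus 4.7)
The plan is to assemble the theorem from the two results already established for the algorithm \textsc{DetPrune} described in Algorithm~\ref{alg-2}: namely \Cref{lem:alg-2-correct}, which handles correctness in one direction, and \Cref{prop:alg-2-runtime}, which handles the running time and representation details. Since \textsc{DetPrune} is itself a deterministic algorithm with a concrete implementation (using RREF for subspace representations and the Gaber--Gallil expanders from \Cref{prop:gg} for derandomization), proving the theorem reduces to quoting these two results and tying up one loose end about the output.

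For the correctness direction, \Cref{lem:alg-2-correct} already shows that every codeword $h \in \calH \cap \calL(g, \Delta-\eps)$ is placed in $\calL$ by the time the algorithm reaches its output line. To complete the proof I would also need the converse inclusion---that every element of $\calL$ genuinely lies in $\calH \cap \calL(g, \Delta-\eps)$---but this will follow by a straightforward induction on $k$. The only place a codeword is placed into $\calL$ without a recursive call is the base case $k = 0$, where membership in $\calL(g, \Delta-\eps)$ is tested explicitly. Every other addition to $\calL$ comes from a recursive call \textsc{DetPrune}$(g, \eps, \calH')$ on an affine subspace $\calH' \subseteq \calH$ whose dimension is strictly smaller (each such $\calH'$ arises as an intersection $\calH_i$ or $\calH_u \cap \calH_v$ with $\calH$), so by the induction hypothesis the output of that call already lies in $\calH' \cap \calL(g, \Delta-\eps) \subseteq \calH \cap \calL(g, \Delta-\eps)$.

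The running time bound matches the conclusion of \Cref{prop:alg-2-runtime} verbatim, namely $(2/\eps)^{O(2^k)} N m \cdot \mathrm{polylog}(q)$, so no further accounting is required. I do not expect any serious obstacle here: the main conceptual difficulty---showing that the hierarchy of expanders with popularity thresholds $p_r$ decaying like $(\eps/2)^{O(2^{k-r})}$ suffices to catch every list codeword via a win-win of ``popular subspace at some level'' versus ``$\{h\}$ popular at level $0$''---has already been resolved in the proof of \Cref{lem:alg-2-correct} by invoking the Expander Mixing Lemma (\Cref{lem:expander-mixing}) together with the partition provided by \Cref{obs:event-partition}. Consequently, the theorem itself has a one-line proof: invoke \Cref{lem:alg-2-correct} and \Cref{prop:alg-2-runtime}, and add the trivial inductive check on the output.
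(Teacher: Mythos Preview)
Your proposal is correct and matches the paper's approach exactly: the paper's proof is the single line ``This immediately follows from \Cref{lem:alg-2-correct} and \Cref{prop:alg-2-runtime}.'' Your additional inductive check that $\calL \subseteq \calH \cap \calL(g,\Delta-\eps)$ is a sound (and arguably welcome) elaboration that the paper leaves implicit.
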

\begin{proof}
    This immediately follows from \Cref{lem:alg-2-correct} and \Cref{prop:alg-2-runtime}.\qedhere
\end{proof}

\subsection{Deterministic Near-linear Time List Decoding for FRS Codes}
Finally, we show our main application of list decoding FRS codes up to capacity.

\begin{corollary}
\label{cor:deterministic-frs}
Let \(n\) be any positive integer, and let \(\mathbb{F}_q\) be a field with \(q > n\). For all $\eps > 0$ and \(R \in (0, 1)\), there exist \(m\) and an infinite family of \(m\)-folded Reed-Solomon codes \(\mathcal{C}^{\textnormal{FRS}}\) of rate \(R\) and block length \(N = n/m\) that are list-decodable up to radius \(1 - R - \varepsilon\) with a deterministic algorithm in time $(2/\eps)^{O\left(2^{1/\eps}\right)} n \cdot \polylog(n)$.
%
\end{corollary}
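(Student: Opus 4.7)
The plan is to combine three ingredients already set up earlier in the paper: the structural theorem of \cite{Gur11} (\cref{thm:list-in-affine-subspace}), which places the list inside a low-dimensional affine subspace; the near-linear time algorithm of \cite{GHKS24} (\cref{thm:fast-affine-subspace-basis}), which computes a basis for this subspace; and the deterministic pruning procedure \textsc{DetPrune} from \cref{thm:deterministic-dec}. Following the standard ``find the subspace, then prune it'' two-step template, these slot together to yield the claimed decoder.

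First, I would pick the parameters $s$ and $m$ so that \cref{thm:list-in-affine-subspace} delivers a decoding radius of at least $1-R-\eps$. A short calculation shows
\[
    (1-R) \;-\; \tfrac{s}{s+1}\paren{1-\tfrac{mR}{m-s+1}} \;\leq\; \tfrac{1-R}{s+1} + \tfrac{R(s-1)}{m-s+1}\mper
\]
Taking $s = \lceil 2/\eps \rceil = O(1/\eps)$ bounds the first term by $\eps/2$, and then choosing $m = \Theta(s/\eps) = \Theta(1/\eps^2)$ bounds the second term by $\eps/2$ (while simultaneously guaranteeing $1 \leq s \leq m$, as \cref{thm:list-in-affine-subspace} requires). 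With these parameters the list $\mathcal{L}(g, 1-R-\eps)$ is contained in an affine subspace $\calH$ of dimension $k \leq s - 1 = O(1/\eps)$.

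Second, I would invoke \cref{thm:fast-affine-subspace-basis} on the received word $g$ to produce a basis for $\calH$ in time $\widetilde{O}(mN) \cdot \poly(s) = \widetilde{O}(n) \cdot \poly(1/\eps)$. Third, I would feed $g$, the gap $\eps$, and $\calH$ into \textsc{DetPrune}: since $\frs$ is $\F_q$-linear of block length $N = n/m$ and distance $\Delta = 1-R > \eps$, \cref{thm:deterministic-dec} outputs $\calH \cap \mathcal{L}(g, 1-R-\eps) = \mathcal{L}(g, 1-R-\eps)$ in time $(2/\eps)^{O(2^k)} Nm \cdot \polylog(q) = (2/\eps)^{O(2^{1/\eps})} \, n \cdot \polylog(n)$, using $Nm = n$ and $q = \poly(n)$. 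The substantive technical work lives entirely inside \cref{thm:deterministic-dec}; there is no real obstacle beyond the parameter bookkeeping above, together with the observation that the pruning step dominates the cost of the basis computation in the final runtime.
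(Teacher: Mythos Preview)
Your proposal is correct and follows essentially the same approach as the paper: choose $s = \Theta(1/\eps)$ and $m = \Theta(1/\eps^2)$, invoke \cref{thm:fast-affine-subspace-basis} to find the $O(1/\eps)$-dimensional containing subspace in near-linear time, and then apply \textsc{DetPrune} via \cref{thm:deterministic-dec}. The paper's proof is terser (it simply sets $s = 1/\eps$, $m = s^2$ without writing out the radius verification), but the argument is the same.
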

\begin{proof}
    We choose $s = 1/\eps$ and $m = s^2$.
    Then, for any received word $g \in (\F_q^m)^N$, we use \Cref{thm:fast-affine-subspace-basis} to extract in $n \cdot \textrm{poly}(1/\eps,\log n)$ time an $O(1/\eps)$ dimensional subspace $\calH$ which contains $\calL(g, 1-R-\eps)$.
    We then pass $g, \eps, \calH$ to \textsc{DetPrune} which, by \Cref{thm:deterministic-dec}, outputs all codewords in $\calL(g, 1-R-\eps)$ in time $(2/\eps)^{O\left(2^{1/\eps}\right)} n \cdot \textrm{polylog}(n)$.
\end{proof}

By using Theorem 2.1 of \cite{GHKS24} in a similar way to \Cref{thm:fast-affine-subspace-basis}, we can also list decode univariate multiplicity codes up to capacity in near-linear time.
\section{A Fully Polynomial-Time Pruning Algorithm}

In this section, we present a {fully polynomial-time} randomized algorithm that list decodes Folded Reed-Solomon (FRS) codes up to a radius of \(1 - R - \varepsilon\) for any \(\varepsilon > 0\). At the core of the algorithm is a subroutine called \textsc{RandPrune} (Algorithm~\ref{alg:prune}), which takes as input a received word \(g \in (\mathbb{F}_q^m)^N\) and an affine subspace \(\mathcal{H}\), and with some probability, returns an FRS codeword in \(\mathcal{H}\) that is close to \(g\). To recover the complete list of codewords near \(g\), one simply repeats this subroutine sufficiently many times on an affine subspace that contains the list.

At a high level, given a received word \(g\) and an affine subspace \(\mathcal{H}\), the \textsc{RandPrune} subroutine first computes, for all coordinates \(i \in [N]\), the affine subspace
\[
\mathcal{H}_i = \{ h \in \mathcal{H} : h \text{ agrees with } g \text{ at coordinate } i \}.
\]
Then it recursively calls \textsc{RandPrune} on \(\mathcal{H}_i\) for a randomly chosen $i\in [N]$, continuing until it reaches a subspace of dimension zero, at which point it returns the unique codeword in that subspace. The key difference in our algorithm compared to the KRSW algorithm is that the coordinate $i$ is not selected uniformly at random; instead, we sample a subspace with probability roughly proportional to its dimension. The following subsection goes into more detail.
\subsection{More Efficient Pruning}
Recall from \cref{sec:our-results} that, for any affine subspace \(\mathcal{H} \subseteq \frs\) of dimension \(k\), and any received word \(g \in (\mathbb{F}_q^m)^N\), the KRSW algorithm recovers each codeword \(h \in \mathcal{L}(g, 1 - R - \varepsilon)\) with probability at least \(\varepsilon^k\). A key insight in that analysis was that, for a uniformly chosen coordinate \(i \in [N]\), the probability that the dimension of \(\mathcal{H}_i\) drops by at least 1 compared to \(\mathcal{H}\) is lower bounded by \(1 - R\). In the case of Folded Reed-Solomon codes, however, the dimension typically drops by much more than 1. Specifically, it can be shown (\cref{lemma: GK16}) that
\[
\ExpOp_{i \sim [N]}[\dim(\mathcal{H}_i)] \leq kR(1 + \Theta(\varepsilon)).
\]
This structural fact was exploited in~\cite{Sri25} to improve list size bounds for Folded Reed-Solomon codes,\footnote{As mentioned before, Algorithm~\ref{alg:prune} may be viewed as an algorithmization of the list size bound proof in~\cite{Sri25}.} and it is precisely what enables us to replace the weak inductive hypothesis \(\varepsilon^k\) with the much stronger \(\varepsilon / k\).

If we want to make the inductive step work with this new, stronger hypothesis, however, we can no longer sample coordinates uniformly at random. Nevertheless, we show that selecting a coordinate \(i\) with probability roughly proportional to \(\dim(\mathcal{H}_i)\) is enough to complete the induction step for the stronger hypothesis. The complete \textsc{RandPrune} subroutine, omitting low-level details regarding subspace representation during execution, is provided in Algorithm~\ref{alg:prune}.

\begin{tbox}
    \algname{\textsc{RandPrune} -- Pruning for a Rate $R$ FRS Code $\frs \sub (\F_q^m)^N$} 
    \label{alg:prune}
    \underline{Input}: A received word $g \in (\F_q^m)^N$.\\
        \phantom{Input:} An affine subspace $\mathcal{H} \subseteq \mathcal{C}^{\textnormal{FRS}}$ of dimension $k$.\\
        \phantom{Input:} An integer \(s\) satisfying \(k\leq s\leq m\).\\
    \underline{Output}: Some $h \in \mathcal{L}\left(g, \frac{s}{s+1}\left(1 - \frac{mR}{m-s+1}\right)\right) \cap \mathcal{H}$ or FAIL.
    \begin{enumerate}
        \item \label{alg:rand-base} If $k = 0$ then halt, outputting the unique codeword \(h\in \mathcal{H}\) if \(\Delta(g, h) < \frac{s}{s+1}\left(1 - \frac{mR}{m - s + 1}\right)\), else FAIL.
        \item For each $i \in [N]$ compute the affine subspace $\calH_i = \set{h \in \calH: h(i) = g(i)}$.
        \item For each $r \in \set{0,1,\ldots,k}$:
            \begin{align*}
                &S_r \leftarrow \{i \in [N] : \mathcal{H}_i \neq \emptyset \text{ and } \dim(\mathcal{H}_i) = r\}
                \\
                &w_r \leftarrow
                \begin{cases}
                    |S_r|\cdot (sr + 1), & \textnormal{if } r\neq k\\
                    0, & \textnormal{if } r=k
                \end{cases}
            \end{align*}
        \item $W \leftarrow \sum_{r=0}^{k} w_r$, halting and outputting FAIL if $W = 0$.
        \item Sample $r^* \in \{0, 1, \dots, k\}$ with probability $\frac{w_{r^*}}{W}$.
        \item Sample $i$ uniformly at random from $S_{r^*}$.
        \item \label{alg:recurse-line} Output \textsc{RandPrune}$(g, \mathcal{H}_i, s)$
    \end{enumerate}
\end{tbox}











The key result of this section is the following claim.

\begin{theorem}
\label{thm:randomized-success}
    Let \(g\in (\mathbb{F}_q^m)^N\) be a received word, \(\mathcal{H}\subseteq \mathcal{C}^{\textnormal{FRS}}\) be an affine subspace of dimension \(k\) and \(s\geq k\). For every \(h\in \mathcal{L}\left(g, \frac{s}{s+1}(1 - \frac{mR}{m-s+1})\right) \cap \mathcal{H}\), the probability that \textsc{RandPrune}\((g, \mathcal{H}, s)\) returns \(h\) is at least \(\frac{1}{sk+1}\).
\end{theorem}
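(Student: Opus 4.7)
The plan is to prove the theorem by strong induction on $k = \dim(\mathcal{H})$. The base case $k = 0$ is immediate: $\mathcal{H}$ contains only $h$, and since $\Delta(g,h) < \frac{s}{s+1}(1-\rho)$ by hypothesis (where I write $\rho \defeq \frac{mR}{m-s+1}$), Line~\ref{alg:rand-base} returns $h$ with probability $1 = \frac{1}{s \cdot 0 + 1}$.

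For the inductive step with $k \geq 1$, the first task is to express the success probability in a clean closed form. Conditioning on the algorithm's choices $(r^*, i)$ and applying the inductive hypothesis to the recursive call on $\mathcal{H}_i$ (of dimension $r \leq k-1$), and writing $T_h = \{i \in [N] : h(i) = g(i)\}$, I obtain
\begin{align*}
    \Pr[\textsc{RandPrune}(g,\mathcal{H},s) = h] \geq \sum_{r=0}^{k-1} \frac{w_r}{W} \cdot \frac{|S_r \cap T_h|}{|S_r|} \cdot \frac{1}{sr+1} = \frac{|T_h \setminus S_k|}{W}\mcom
\end{align*}
where the last equality is precisely why $w_r$ is defined as $|S_r|(sr+1)$: it cancels the $(sr+1)$ factor coming from the inductive hypothesis. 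Next I would observe the structural fact that $S_k \subseteq T_h$: if $\dim(\mathcal{H}_i) = k$ then $\mathcal{H}_i = \mathcal{H}$ (since $\mathcal{H}_i \subseteq \mathcal{H}$ has the same dimension), so every codeword in $\mathcal{H}$ — including $h$ — agrees with $g$ at coordinate $i$. Consequently $|T_h \setminus S_k| = |T_h| - |S_k|$.

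The core of the argument is now an upper bound on $W$ using the Guruswami--Kopparty lemma (\cref{lemma: GK16}), which for FRS codes yields $\sum_{r=0}^{k} r |S_r| \leq Nk\rho$. Together with $\sum_{r=0}^{k-1} |S_r| \leq N - |S_k|$ this gives
\begin{align*}
    W = s\sum_{r=0}^{k-1}r|S_r| + \sum_{r=0}^{k-1}|S_r| \leq sNk\rho + N - (sk+1)|S_k|\mper
\end{align*}
The decoding radius hypothesis gives $|T_h| > N \cdot \frac{1+s\rho}{s+1}$. Substituting both bounds into the required inequality $(sk+1)(|T_h| - |S_k|) \geq W$, the $(sk+1)|S_k|$ terms cancel on both sides, and the residue reduces to the scalar comparison $\frac{1+s\rho}{s+1} \geq \frac{sk\rho + 1}{sk+1}$. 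Cross-multiplying and collecting terms yields $s(k-1)(1-\rho) \geq 0$, which holds because $k \geq 1$ and $\rho < 1$ (the latter being implicit in the decoding radius being positive).

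The main obstacle I expect is not any single calculation, but rather recognizing that the weighting $w_r \propto (sr+1)$ is chosen precisely so that the inductive hypothesis telescopes cleanly \emph{and} the $|S_k|$ contributions align on both sides after applying the Guruswami--Kopparty bound. This is the algorithmic incarnation of the linear-in-$k$ strengthening of the inductive hypothesis from~\cite{Sri25}; a uniform choice of coordinates (or any other weighting) would either leave an uncancelled error term or degrade the guarantee back to the KRSW-style $\varepsilon^k$ bound.
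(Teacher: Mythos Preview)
Your proposal is correct and follows essentially the same inductive argument as the paper: express the success probability as $\frac{1}{W}\sum_{r<k}|S_r\cap T_h|$, bound $W$ via \cref{lemma: GK16}, and reduce to the scalar inequality $s(k-1)(1-\rho)\geq 0$. The one cosmetic difference is that you observe $S_k\subseteq T_h$ to get the exact equality $|T_h\setminus S_k|=|T_h|-|S_k|$, whereas the paper simply uses $|A_k|\leq|S_k|$; both yield the same bound.
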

\begin{proof}
    The proof is by induction on the dimension \(k\) of the affine subspace \(\mathcal{H}\). Clearly, the statement holds when \(k=0\) since in that case there is exactly one codeword \(h\in \mathcal{H}\). Furthermore, if \(\Delta(g, h) < \frac{s}{s+1}(1-\frac{mR}{m-s+1})\), then the algorithm outputs \(h\) with probability 1 on Line~\ref{alg:rand-base}. So, let \(k\geq 1\) and assume that the claim holds for all affine subspaces of dimension less than \(k\). Fix some \(h\in \mathcal{L}\left(g, \frac{s}{s+1}(1 - \frac{mR}{m-s+1})\right) \cap \mathcal{H}\), and let \(A\subseteq [N]\) be the set of coordinates where \(g\) agrees with \(h\), i.e.,
    \[A:= \{i\in [N]: g(i)=h(i)\}.\]
    Note that we have \(|A| \geq \left(1-\frac{s}{s+1}(1-\frac{mR}{m-s+1})\right)N\). Furthermore, for each \(r\in \{0, 1, \ldots , k\}\), define
    \(A_r = A\cap S_r\). We can now write ---
    \allowdisplaybreaks
    \begin{align*}
    \prob[\textsc{RandPrune}(g, \mathcal{H}, s) \text{ outputs } h]
    &= \frac{1}{W} \sum_{r=0}^{k} w_r  \Pr{i \sim S_r}{i \in A_r \land \textsc{RandPrune}(g, \mathcal{H}_i, s) \text{ outputs } h} \\
    & = \frac{1}{W} \sum_{r=0}^{k-1} w_r  \prob_{i \sim S_r}\left[i \in A_r\right] \cdot \prob_{i \sim S_r}\left[\textsc{RandPrune}(g, \mathcal{H}_i, s) \text{ outputs } h ~\vert~ i\in A_r \right] \\
    &\geq \frac{1}{W} \sum_{r=0}^{k-1} |S_r| \cdot (sr + 1) \cdot \frac{|A_r|}{|S_r|} \cdot \frac{1}{sr + 1} \\
    &= \frac{1}{W} \sum_{r=0}^{k-1} |A_r| \\
    &= \frac{1}{W} \left(|A| - |A_k|\right) \\
    & \geq \frac{1}{W} \left(|A| - |S_k|\right) \\
    &\geq \frac{1}{W}\left(\left(1 - \frac{s}{s+1}\left(1-\frac{mR}{m-s+1}\right)\right)N - |S_k|\right) \\
    &= \frac{1}{W}\left(\left(\frac{1}{s+1} - \frac{s}{s+1}\cdot\frac{mR}{m-s+1}\right)N - |S_k|\right).
    \end{align*}
    Next, we will upper bound the value of \(W\). To do so, we will use the following lemma due to Guruswami and Kopparty~\cite{GK16}. We refer the reader to \cite[Theorem~5.2]{Sri25} or \cite[Lemma~3.1]{GHKSS25} for self-contained proofs.
    \begin{lemma}[\cite{GK16}]
    \label{lemma: GK16}
        \( \sum_{i=1}^N \dim(\calH_i) = \sum_{r=0}^k r|S_r| \leq k\cdot \frac{mR}{m-k+1}\cdot N\).
    \end{lemma}
    We can now write ---
    \allowdisplaybreaks
    \begin{align*}
W &= \sum_{r=0}^{k} w_r = \sum_{r=0}^{k-1} w_r \\
  &= \sum_{r=0}^{k-1} |S_r| \cdot (sr + 1) \\
  &= s\sum_{r=0}^{k-1} r|S_r| + \sum_{r=0}^{k-1} |S_r| \\
  &\leq s \left(k \cdot \frac{mR}{m-k+1} \cdot N - k |S_k|\right) + \sum_{r=0}^{k-1} |S_r| & [\textnormal{using~\cref{lemma: GK16}}]\\
  &\leq s k\left(\frac{mR}{m-k+1} \cdot N - |S_k|\right) + (N-|S_k|) & [\textnormal{using the disjointness of the \(S_r\)'s}]\\
  &\leq sk \left(\frac{mR}{m-s+1} N - |S_k|\right) + (N - |S_k|). & [\textnormal{since \(s\geq k\)}]\\
  &= \parens*{1+sk\cdot \frac{mR}{m-s+1}}N - (sk+1)|S_k|
\end{align*}
Therefore, we have --- \[\prob[\textsc{RandPrune}(g, \mathcal{H}, s) \text{ outputs } h] \geq \frac{\left(\frac{1}{s+1} - \frac{s}{s+1}\cdot\frac{mR}{m-s+1}\right)N - |S_k|}{\parens*{1+sk\cdot \frac{mR}{m-s+1}}N - (sk+1)|S_k|},\]
and we would like to show that
\begin{equation*}
    \frac{\left(\frac{1}{s+1} - \frac{s}{s+1}\cdot\frac{mR}{m-s+1}\right)N - |S_k|}{\parens*{1+sk\cdot \frac{mR}{m-s+1}}N - (sk+1)|S_k|} \geq \frac{1}{sk+1}.
\end{equation*}
This is equivalent to showing
\begin{equation*}
    (sk+1)\left(\left(\frac{1}{s+1} - \frac{s}{s+1}\cdot\frac{mR}{m-s+1}\right)N - |S_k|\right) - \left(\parens*{1+sk\cdot \frac{mR}{m-s+1}}N - (sk+1)|S_k|\right) \geq 0.
\end{equation*}
The term involving $|S_k|$ cancels, and after dividing by $N$, we are left with
\begin{align*}
	&(sk+1)\left(\frac{1}{s+1} - \frac{s}{s+1}\cdot\frac{mR}{m-s+1}\right)  - \parens*{1+sk\cdot \frac{mR}{m-s+1}} \\
	=~& (sk+1)\parens*{\frac{1}{s+1} - \frac{1}{s+1} \cdot \frac{mR}{m-s+1}} - 1+ \frac{mR}{m-s+1} \\
	=~& \frac{sk+1}{s+1}\parens*{1-\frac{mR}{m-s+1}} - \parens*{1-\frac{mR}{m-s+1}} \\
	=~& \parens*{\frac{sk+1}{s+1}-1}\parens*{1-\frac{mR}{m-s+1}} \\
	=~& \frac{s(k-1)}{s+1}\parens*{1-\frac{mR}{m-s+1}} \geq 0. &&[\text{using }k\geq 1]
\end{align*}\qedhere
\end{proof}

We now proceed to analyze the running time of Algorithm~\ref{alg:prune}. Note that, as in~\cref{sec:deterministic}, subspaces are represented using a basis in this algorithm as well.

\begin{proposition}
\label{prop:prune-runtime}
Let \(g \in (\mathbb{F}_q^m)^N\) be a received word, \(\mathcal{H} \subseteq C^{\textnormal{FRS}}\) be an affine subspace of dimension \(k\), and \(s \geq k\). Then \textsc{RandPrune}\((g, \mathcal{H}, s)\) performs at most \(O(mk^3 N)\) field operations over \(\mathbb{F}_q\).
\end{proposition}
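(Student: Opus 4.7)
The plan is to bound the cost of a single invocation of \textsc{RandPrune} on a $d$-dimensional subspace by $O(Nmd^2)$ field operations, and then to argue that the algorithm's recursion depth is at most $k$, so that summing $\sum_{d=1}^{k} O(Nmd^2) = O(Nmk^3)$ yields the claimed bound.

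The representation is the key setup: an affine subspace $\mathcal{H}$ of dimension $d$ is stored by a basis $(h_0, h_1, \ldots, h_d)$ of codewords in $(\mathbb{F}_q^m)^N$, all of whose evaluations are precomputed. Under this representation, for each coordinate $i \in [N]$ the subspace $\mathcal{H}_i$ is the affine solution set of the linear system $h_0(i) + \sum_{r=1}^d \alpha_r h_r(i) = g(i)$ in the unknowns $\alpha_1, \ldots, \alpha_d$. This is an $m \times d$ linear system that can be brought to reduced row echelon form in $O(md^2)$ field operations, simultaneously revealing whether $\mathcal{H}_i$ is empty and, if not, its dimension. Doing this for all $i \in [N]$ costs $O(Nmd^2)$, which is enough to compute the sets $S_r$ and weights $w_r$. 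The sampling of $r^*$ and $i$ thereafter costs only $O(N)$, which is absorbed.

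Once $i$ is sampled, I still need to build an explicit basis for $\mathcal{H}_i$ to pass into the recursive call. Having already computed the RREF of the $m \times d$ system for this specific $i$, I extract a particular solution $\alpha^{*}$ and a basis of the solution space in coefficient form (in terms of the $\alpha_r$'s), using $O(d^2)$ field operations. Each new basis element for $\mathcal{H}_i$ is then obtained by forming the corresponding linear combination of $h_0, \ldots, h_d$, which takes $O(Nmd)$ operations per vector. Since $\dim(\mathcal{H}_i) \leq d - 1$, there are at most $d$ such vectors to form, giving $O(Nmd^2)$ for this conversion step. Thus the total work per invocation on a $d$-dimensional subspace is $O(Nmd^2)$.

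Finally, I observe that the recursion strictly decreases dimension: the weight $w_k$ is set to $0$, so the algorithm never samples $i \in S_k$ and never recurses on a subspace of the same dimension. Consequently, the sequence of recursive calls has dimensions $k > d_1 > d_2 > \cdots \geq 0$, and the total number of calls with positive dimension is at most $k$; the base-case call with $k=0$ contributes only $O(Nm)$. Summing the per-level cost gives $\sum_{d=1}^{k} O(Nm d^2) = O(Nmk^3)$, as claimed. The only mild subtlety I anticipate is making sure the RREF computation done once per coordinate simultaneously yields the dimension, a feasibility test, and the data needed to build a basis of $\mathcal{H}_i$ without redoing work when $i$ is finally sampled; this is a standard bookkeeping exercise and does not affect the asymptotics.
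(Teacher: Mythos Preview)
Your proof is correct and follows essentially the same approach as the paper: bound the per-level cost by $O(Nmk^2)$ via Gaussian elimination on the $m \times k$ systems for each coordinate, then multiply by the recursion depth $k$. Your version is in fact slightly more detailed (tracking the per-level dimension $d$ and explicitly accounting for the basis-construction step), but the argument and the resulting $O(Nmk^3)$ bound are the same.
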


\begin{proof}
The recursive call to \textsc{RandPrune} on Line~\ref{alg:recurse-line} is made on a subspace \(\mathcal{H}_i\) with strictly smaller dimension, i.e., \(\dim(\mathcal{H}_i) < k\). Hence, the depth of recursion is at most \(k\). For each coordinate \(i \in [N]\), checking whether \(\mathcal{H}_i\) is empty or computing a basis for \(\mathcal{H}_i\) requires solving a system of \(m\) linear constraints on at most \(k\) variables over \(\mathbb{F}_q\), which can be done using Gaussian elimination in \(O(mk^2)\) field operations. Since this check is performed across all \(N\) coordinates at each recursion level, the total number of field operations is bounded by \(O(mk^2 N)\) per level. Multiplying by the worst-case recursion depth \(k\) yields an overall count of \(O(mk^3 N)\) field operations.
\end{proof}

\subsection{Fully Polynomial Time List Decoding for FRS Codes}
We now use Algorithm~\ref{alg:prune} to give a list decoding algorithm for Folded Reed-Solomon codes. Recall~\cref{thm:list-in-affine-subspace}, which states that the list of close codewords for Folded Reed-Solomon codes lies in a low-dimensional affine subspace \(\mathcal{H}\). In what follows, let \(T\) denote the number of field operations required to compute a basis for this subspace \(\mathcal{H}\). We now state the main list decoding result.

\begin{theorem}
\label{thm:repeat-randomized}
Let \(\mathcal{C}^{\textnormal{FRS}}\) be an \(m\)-Folded Reed-Solomon code of block length \(N = n/m\) and rate \(R\). Let \(g \in (\mathbb{F}_q^m)^N\) be a received word, and let \(s \in [m]\) be any integer. Then for any \(\beta > 0\), there exists a randomized algorithm that performs at most \(T+O\left(ms^5 \log\left(\tfrac{s}{\beta}\right) N\right)\) field operations over \(\mathbb{F}_q\) and outputs
\(
\mathcal{L}\left(g, \frac{s}{s+1} \left(1 - \frac{mR}{m - s + 1}\right)\right)
\)
with probability at least \(1 - \beta\).
\end{theorem}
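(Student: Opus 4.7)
The plan is a straightforward combination of the ingredients already assembled: use the fast basis-finding algorithm to reduce to a low-dimensional subspace, repeat \textsc{RandPrune} enough times to hit every codeword in the list with high probability, and then multiply out the cost.

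First, I would invoke \cref{thm:fast-affine-subspace-basis} (and its near-linear time realization whose field-operation count we denote by $T$) to compute in $T$ field operations a basis for an affine subspace $\mathcal{H} \subseteq \mathcal{C}^{\textnormal{FRS}}$ of dimension at most $s-1$ that contains the list $\mathcal{L}\!\left(g, \frac{s}{s+1}(1 - \frac{mR}{m-s+1})\right)$. From this point on, every codeword we care about lies in $\mathcal{H}$, so it suffices to recover $\mathcal{L}\!\left(g, \frac{s}{s+1}(1 - \frac{mR}{m-s+1})\right) \cap \mathcal{H}$.

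Next, I would run \textsc{RandPrune}$(g, \mathcal{H}, s)$ repeatedly, say $M$ times independently, and take the union of all non-FAIL outputs as our candidate list. By \cref{thm:randomized-success}, with $k = \dim(\mathcal{H}) \le s-1$, every fixed $h$ in the target list is returned by a single call with probability at least $\frac{1}{s(s-1)+1} \geq \frac{1}{s^2}$. Hence the probability that a given $h$ is missed in all $M$ runs is at most $\left(1 - \frac{1}{s^2}\right)^M \le e^{-M/s^2}$. By \cref{thm:FRS-gen-singleton=bound} the target list has size at most $s$, so a union bound shows that choosing $M = \Theta\!\left(s^2 \log(s/\beta)\right)$ drives the probability that any codeword is missed below $\beta$.

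Finally, I would bound the running time. \cref{prop:prune-runtime} tells us that each invocation of \textsc{RandPrune} on a $k \le s-1$ dimensional subspace uses $O(mk^3 N) = O(ms^3 N)$ field operations. Multiplying by $M = \Theta(s^2 \log(s/\beta))$ invocations gives $O\!\left(m s^5 \log(s/\beta)\, N\right)$ field operations for the pruning phase, and adding the initial basis cost $T$ yields the claimed bound. The only mildly delicate point is to ensure that each independent trial of \textsc{RandPrune} is indeed a fresh execution (independent random bits), and that we can filter the unioned output down to the true list simply by checking Hamming distance from $g$ for each candidate --- this check costs $O(mN)$ per candidate and at most $s$ candidates survive, which is absorbed in the stated bound. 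There is no real obstacle; the heavy lifting was already done in \cref{thm:randomized-success} and \cref{thm:FRS-gen-singleton=bound}.
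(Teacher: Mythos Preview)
Your proposal is correct and follows essentially the same approach as the paper: compute the subspace basis, repeat \textsc{RandPrune} $\Theta(s^2\log(s/\beta))$ times, and combine \cref{thm:randomized-success}, \cref{thm:FRS-gen-singleton=bound}, and \cref{prop:prune-runtime} via a union bound. The paper uses the slightly tighter per-call success probability $\frac{1}{s(s-1)+1}$ rather than your $\frac{1}{s^2}$, and it omits your final distance-filtering step since \textsc{RandPrune} already performs that check in its base case, but neither of these affects the argument or the stated bound.
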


\begin{proof}
\begin{sloppypar}
    Given \(g\) and \(s\), we first compute a basis for the subspace \(\mathcal{H}\) from~\cref{thm:list-in-affine-subspace}. We then run \textsc{RandPrune}\((g, \mathcal{H}, s)\) independently \(t := \left\lceil(s(s - 1) + 1) \ln\left(\tfrac{s}{\beta}\right)\right\rceil\) times, and return the union \(\mathcal{L}^*\) of all the codewords found.
\end{sloppypar}
From~\cref{thm:FRS-gen-singleton=bound}, we know that the list size satisfies \(\left| \mathcal{L}\left(g, \frac{s}{s+1} \left(1 - \frac{mR}{m - s + 1}\right)\right) \right| \leq s\). Therefore, by~\cref{thm:randomized-success} and a union bound over the \(s\) possible codewords, we have
\allowdisplaybreaks
\begin{align*}
\prob\left[\exists h \in \mathcal{L}\left(g, \frac{s}{s+1} \left(1 - \frac{mR}{m - s + 1}\right)\right) \setminus \mathcal{L}^* \right]
\leq s \left(1 - \frac{1}{s(s - 1) + 1}\right)^t
\leq s \cdot e^{-t / (s(s - 1) + 1)} 
\leq \beta.
\end{align*}
\begin{sloppypar}
   Finally, by \cref{prop:prune-runtime}, each call to \textsc{RandPrune} performs at most \(O(ms^3 N)\) field operations over \(\mathbb{F}_q\), and so the total field operation count is \(T+O\left(ms^5 \log\left(\tfrac{s}{\beta}\right) N\right)\). 
\end{sloppypar}
\end{proof}
Finally, we can use \cref{thm:repeat-randomized} to list decode Folded Reed-Solomon codes up to capacity.

\begin{corollary}\label{cor:randomized_frs}
Let \(n\) be any positive integer, and let \(\mathbb{F}_q\) be a field with \(q > n\). For all \(\varepsilon, \beta > 0\) and \(R \in (0, 1)\), there exist \(m\) and an infinite family of \(m\)-Folded Reed-Solomon codes \(\mathcal{C}^{\textnormal{FRS}}\) of rate \(R\) and block length \(N = n/m\) that are list-decodable up to radius \(1 - R - \varepsilon\) in time \(\poly\left(\frac{1}{\varepsilon}, \log\left(\frac{1}{\beta}\right), \log q\right) \cdot \widetilde{O}(N)\), with success probability at least \(1 - \beta\).
\end{corollary}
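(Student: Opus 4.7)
The plan is to invoke \cref{thm:repeat-randomized} with an appropriate choice of $s$ and $m$ so that the guaranteed decoding radius $\frac{s}{s+1}\bigl(1-\tfrac{mR}{m-s+1}\bigr)$ exceeds $1-R-\varepsilon$, and then combine its runtime with the basis computation from \cref{thm:fast-affine-subspace-basis}. The parameters will be $s = \lceil 2/\varepsilon\rceil$ (with a slightly larger constant if needed) and $m = s^2$; since $m \geq s \geq 1$, both theorems apply. One then verifies, by writing $\frac{s}{s+1} = 1 - \tfrac{1}{s+1}$ and $\frac{mR}{m-s+1} \leq R\bigl(1 + \tfrac{s-1}{s^2-s+1}\bigr) \leq R\bigl(1+\tfrac{1}{s}\bigr)$, that
\[
\frac{s}{s+1}\left(1-\frac{mR}{m-s+1}\right) \;\geq\; 1 - R - \frac{R}{s} - \frac{1}{s+1} \;\geq\; 1 - R - \varepsilon,
\]
so that $\mathcal{L}(g,1-R-\varepsilon) \subseteq \mathcal{L}\bigl(g, \tfrac{s}{s+1}(1-\tfrac{mR}{m-s+1})\bigr)$ and it suffices to decode up to the latter radius.

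Next, I would apply \cref{thm:fast-affine-subspace-basis} to compute a basis of the affine subspace $\mathcal{H}$ containing this larger list, at a cost of $T = \widetilde{O}(mN)\cdot\textrm{poly}(s) = \widetilde{O}(N)\cdot\textrm{poly}(1/\varepsilon)$ field operations (using $mN = n$ and $m,s = \textrm{poly}(1/\varepsilon)$, which absorbs the dependence on $n$ into the $\textrm{poly}(1/\varepsilon)\cdot\widetilde O(N)$ factor). Feeding this basis into \cref{thm:repeat-randomized} with the chosen $s$ and the given error parameter $\beta$, the full algorithm performs
\[
T + O\!\left(m s^5 \log\tfrac{s}{\beta}\, N\right) \;=\; \textrm{poly}\!\left(\tfrac{1}{\varepsilon},\log\tfrac{1}{\beta}\right)\cdot \widetilde{O}(N)
\]
field operations over $\mathbb{F}_q$, and recovers the full list $\mathcal{L}(g, 1-R-\varepsilon)$ with probability at least $1-\beta$.

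Finally, since each $\mathbb{F}_q$ operation can be simulated in $\textrm{polylog}(q)$ bit operations, multiplying through gives an overall runtime of $\textrm{poly}(1/\varepsilon, \log(1/\beta), \log q) \cdot \widetilde{O}(N)$, as claimed. The only substantive step is the radius calculation above; the rest is routine bookkeeping between $n$, $N = n/m$, $m$, and $s$, and an application of the two supplied theorems. Thus, there is no serious obstacle beyond verifying that the $(s,m)$ choice yields a radius of at least $1-R-\varepsilon$ while keeping $m = \textrm{poly}(1/\varepsilon)$, which the computation above handles.
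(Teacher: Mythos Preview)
Your proposal is correct and follows essentially the same approach as the paper: choose $s = \Theta(1/\varepsilon)$ and $m = s^2$, verify that the resulting decoding radius is at least $1-R-\varepsilon$, then combine \cref{thm:fast-affine-subspace-basis} with \cref{thm:repeat-randomized} and account for the cost of field operations. Your radius verification is in fact more explicit than the paper's (which just asserts ``it is not hard to verify''), and your choice $s = \lceil 2/\varepsilon\rceil$ versus the paper's $s \geq 1/\varepsilon$ is an immaterial difference in constants.
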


\begin{proof}
We instantiate \cref{thm:repeat-randomized} with the parameters \(s \geq \frac{1}{\varepsilon}\) and \(m\geq s^2\). It is not hard to verify that with these settings,
\(
\frac{s}{s+1} \left(1 - \frac{mR}{m - s + 1} \right) \geq 1 - R - \varepsilon.
\)
Hence, the resulting code \(\mathcal{C}^{\textnormal{FRS}}\) is algorithmically list-decodable up to radius \(1 - R - \varepsilon\), with success probability at least \(1 - \beta\), using
\(T+
\text{poly}\left(\tfrac{1}{\varepsilon}, \log\left(\tfrac{1}{\beta}\right)\right)\cdot O(N)
\)
field operations over \(\mathbb{F}_q\). By \cref{thm:fast-affine-subspace-basis}, we have \(T =  \text{poly}(\tfrac{1}{\varepsilon}) \cdot \widetilde{O}(N)\). Therefore, the total number of field operations is bounded by \(\poly\left(\frac{1}{\varepsilon}, \log\left(\frac{1}{\beta}\right)\right) \cdot \widetilde{O}(N)\). Since each field operation takes \(\polylog(q)\) time, the claim follows.
\end{proof}

\section*{Acknowledgements}
We thank Rishabh Kothary for pointing out an error in an older version of \Cref{obs:event-partition}.

\bibliographystyle{alphaurl}
\bibliography{macros,references}

\end{document}